\newtheorem{thm}{Theorem}
\newtheorem{lem}{Lemma}
\newtheorem*{proof}{Proof}
\DeclareMathOperator{\EMST}{EMST} 
\begin{document}
\newtcolorbox{Metcalfe}{width=2cm,size=small,
	colframe=white,
	colback=yellow!45!white}
\newtcolorbox{Sarnoff}{width=2cm,size=small,
	colframe=white,
	colback=blue!10!white}
\newtcolorbox{Odlyzko}{width=2cm,size=small,
	colframe=white,
	colback=green!10!white}

\title{Emergence of Metcalfe's Law: Mechanism and Model}

%

\author[a,b]{Cheng Wang \thanks{Corresponding author: cwang@tongji.edu.cn}}
\author[b]{Yi Wang}
\author[a,b]{Changjun Jiang}

\affil[a]{Shanghai Artificial Intelligence Laboratory, China}
\affil[b]{Department of Computer Science and Technology, Tongji University, China}

\renewcommand*{\Affilfont}{\small\it} 
\date{} 


\maketitle

\begin{abstract}
Metcalfe's Law captures the relationship between the value of a network and its scale, asserting that a network's value is directly proportional to the square of its size. Over the past four decades, various researchers have proposed different scaling laws on this subject. Remarkably, these seemingly conflicting conclusions have all been substantiated by robust data validation, raising the question of which law holds greater representativeness. Consequently, there remains a need for inherent mechanism to underpin these laws. This study aims to bridge this disparity by offering a theoretical interpretation of Metcalfe's Law and its variations.
Based on a certain degree of consensus that "traffic is value", network effects are gauged by means of network traffic load. A general analytical boundary for network traffic load is deduced by balancing practicality and analytical feasibility through the establishment of a comprehensive network model. From this foundation, the mechanism behind Metcalfe's Law and its variants is elucidated, aligning the theoretical derivations with the previously validated empirical evidence for Metcalfe's Law.
\end{abstract}

{\noindent}	 \rule[2pt]{17cm}{0.05em}\\

Among the various influential ideas that emerged during the Internet boom, Metcalfe's Law \cite{gilder1993metcalf} stood out as one of the most significant notions. 
This law proposes that the value of a network is directly proportional to the square of the number of its nodes, i.e., $V\propto n^2$, following a clear scaling law. 
It is widely regarded as a universal empirical rule and has achieved prominent status.
Similar to Moore's Law, it is considered as one of the five fundamental empirical laws that have stood the test of time \cite{ross2003commandments}. 
In 1996, Reed Hundt, the former chairman of the Federal Communications Commission, asserted that Metcalfe's Law and Moore's Law provided the foundation for comprehension of the Internet \cite{kaprowski1996fcc}. 
With the emergence of Metcalfe's Law, more research has surfaced. 
Studies of Metcalfe's Law in the context of network effects have become a contentious subject in various explorations \cite{ reed1999weapon,swann2002functional,briscoe2006metcalfe, madureira2013empirical, metcalfe2013metcalfe, van2014metcalfe,zhang2015tencent}. 
With the emergence of Metcalfe's Law, a series of other laws have also been proposed \cite{reed1999weapon, swann2002functional, briscoe2006metcalfe,metcalfe2013metcalfe, zhang2015tencent, wang2023revisiting}. These laws can be concluded as Reed's Law \cite{reed1999weapon}, Sarnoff's Law \cite{swann2002functional}, and Odlyzko's Law \cite{briscoe2006metcalfe}. These formulation can be showed as $V\propto 2^n$, $V\propto n$ and $V\propto n\log n$. As all of these laws describe the relationship between network value and scale, we refer to them as variants of Metcalfe's Law.

Metcalfe's Law is a key explanatory model for technology growth, from traditional communication to modern web applications and social networks\cite{hendler2008metcalfe}. It significantly influences digital blockchain networks and cryptocurrency valuations, including Bitcoin analysis\cite{alabi2017digital,peterson2018metcalfe,wheatley2019bitcoin}. Jun et al.\cite{jun2021how} showed social network services follow a polynomial adoption curve based on web traffic. Scala et al.\cite{scala2023explosive} highlighted networks' explosive growth value. Ma et al.\cite{ma2023does} examined Metcalfe's Law's nonlinear impact on the green economy's digitization. Tang et al.\cite{tang5study} introduced the DEVA method for enterprise valuation using Metcalfe's Law. Assif et al.\cite{assif2023fair} analyzed crowdsourcing network value distribution fairness. Roberto\cite{moro2023smart} applied the law to Covid-19 healthcare digital transformation. Nowak et al.\cite{nowak2023social} integrated network laws into urban platform coordination.

A series of data validation studies \cite{hendler2008metcalfe,peterson2018metcalfe, wheatley2019bitcoin,alabi2017digital,reed1999weapon, swann2002functional,  zhang2015tencent } have confirmed seemingly conflicting versions of Metcalfe's Law and its variants, creating a debate over which scaling law holds greater representativeness. Additionally, in 2023, Zhang et al. \cite{zhang2023facebook} reevaluated the most recent data from Facebook and Tencent, revealing that a cube law ($V\propto n^3$) offered a more precise representation of actual network values compared to Metcalfe's Law.

The reason why these laws always hold true in the data validation work, and what laws can better describe the relationship between the value of the modeling network and the number of nodes, are unresolved questions in the current research. Such questions cannot be solved by empirical research. In order to properly understand the relationship between network value and size, and to make people realize the true universality of this research, a series of mechanisms need to be established to reproduce these laws.

In this research, we utilize the measurement of network traffic as a fundamental metric for assessing the value of networks within the scope of the mechanism modeling \cite{rajgopal2001web,jun2021how}, aligning with the principle that \textit{traffic is value}.
Expanding on this insight, we rely on network traffic load as a crucial indicator to evaluate the worth of networks. The reasoning behind this choice lies in the fact that a higher traffic load signifies an increased network capacity to accommodate nodes, consequently enhancing the overall value of the network.  

\section*{Problem Formulation}\label{Pro_For}
\vspace{-0.1in}

Here we report on the generation of network and the mechanisms based on the network. We consider a general network model, where all nodes can be supposed as both sources and users of data, capable of generating and receiving data.
We outline this network structure and define the traffic load accordingly.
We propose a network architecture consisting of two layers, i.e., the underlying and overlying layer, as shown in Figure 1(a). The former is dedicated to the physical transmission of data, while the latter facilitates interactions among devices or users. 
This research emphasizes the examination of information transmission load specifically within the overlying layer, exploring the intricacies and effects of this load on network performance and overall system behavior.
By analyzing the information transmission load between overlying layer, we can gain a deeper insight into the transmission patterns of network architecture.


The true value of a network lies in fulfilling the requirements of its constituent nodes. Unlike an isolated node that operates independently, the purpose of network is to facilitate the satisfaction of node needs through data transmission. As data is transmitted, it generates traffic, which becomes a crucial indicator of the value of network. Thus, the value of network becomes evident through the volume of network traffic.
As Rajgopal et al. \cite{rajgopal2001web} found that the relationship between traffic and network economy is related to Metcalfe's law, we use \textit{traffic load} as a metric of network value.

Given a network $\mathrm{N}$, the traffic load for a data transportation session  $\mathrm{T}_{k}$ (e.g., broadcast) over $\mathrm{N}$ are defined as
$\mathrm{L}_{\mathrm{N}}^{\mathrm{T}_k}=\lambda_k\cdot d_{{\mathrm{S}_\mathrm{N}},u_k},$ where $\lambda_{k}$ is the data arrival rate of node $u_{k}$, and $d_{{\mathrm{S}_\mathrm{N}},u_k}$ indicates the aggregated distance data transporting from $u_{k}$ to destinations of $\mathrm{T}_{k}$ over the network $\mathrm{N}$ under a given data transportation scheme $\mathrm{S}_\mathrm{N}$. The detailed description is showed in Methods \ref{Def-Tra}.

\begin{figure}[htbp]
	\begin{minipage}{0.47\columnwidth}
		\centering
		\subfloat[The proposed two-layer network model]{
			\includegraphics[width=1.1\linewidth]{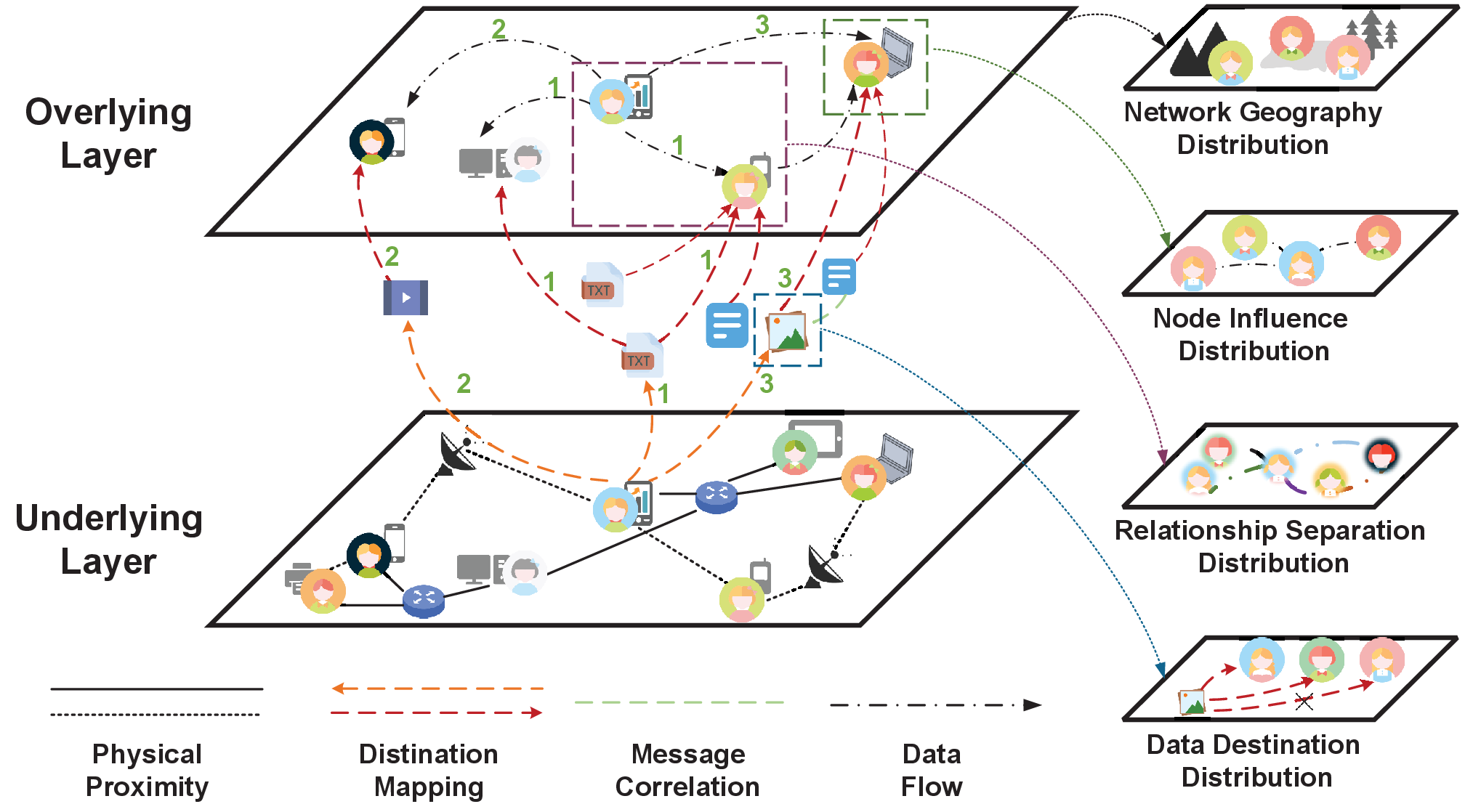}
			}
		\subfloat[The scope of different laws corresponding to the conditions while $\lambda=\Theta(1)$]{
				\includegraphics[width=0.9\linewidth]{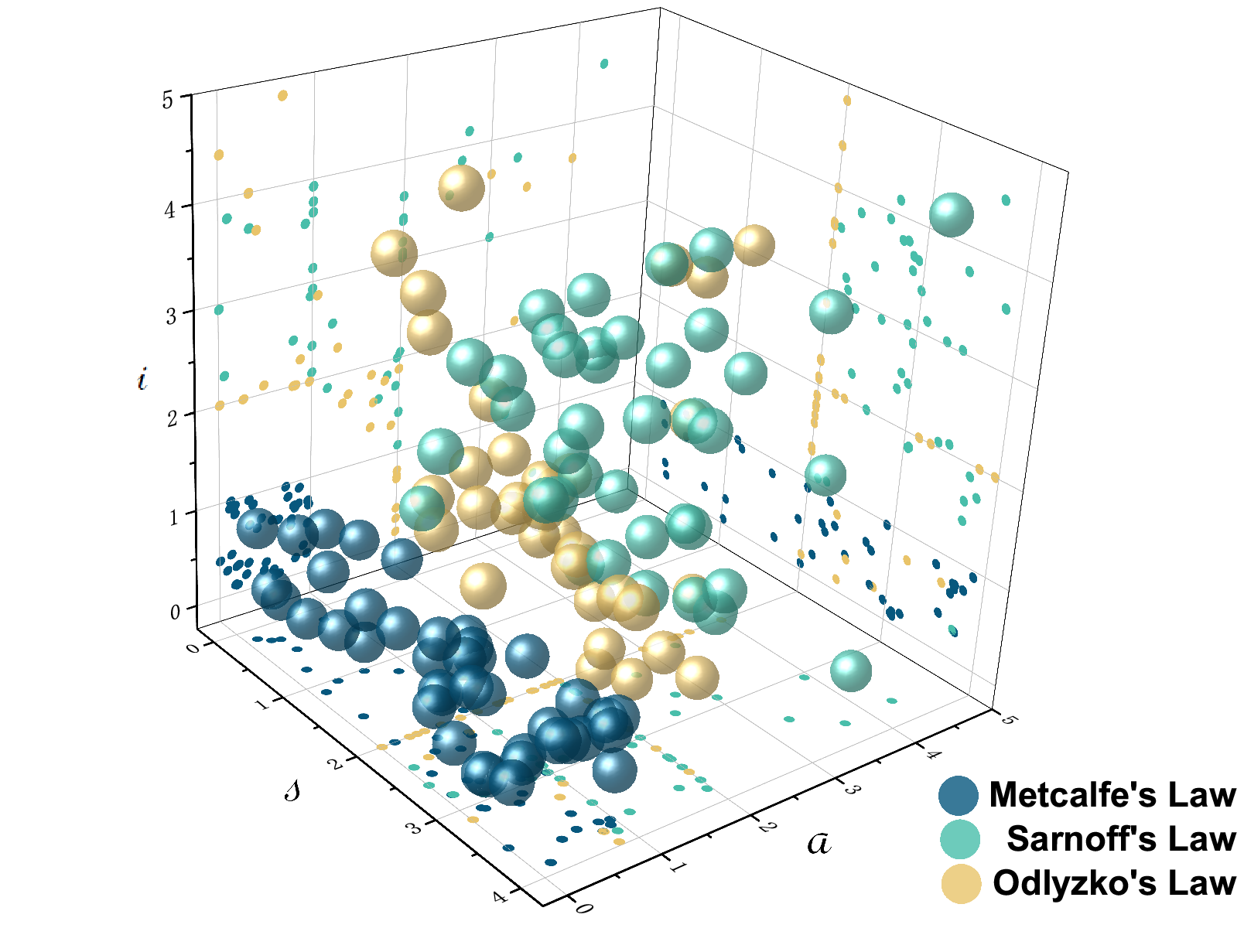}
			}
		\end{minipage}
	\caption{\textbf{The mechanisms of Metcalfe's Law.} \textbf{a} The proposed two-layer network model. The underlying layer handles physical data transmission, while the overlying layer facilitates device/user interactions. This work introduces four parameters to give understandings for the completeness of the overlying layer.
		\textbf{b} The scope of different laws corresponding to the conditions while $\lambda=\Theta(1)$. It can be noticed that the region associated with Metcalfe's Law is the most concentrated. We remark that under the condition $\lambda=\Theta(1)$, the cube law has no corresponding conditions.}
\end{figure}

\section*{Generation Mechanisms of Traffic Load}\vspace{-0.1in}
There are four generic aspects of network traffic load. 
To ensure completeness, we start by defining the overall structure of the network and its geographical distribution. Next, we consider various elements that constitute the network. Starting from the nodes, we examine the influence of different nodes. Taking into account the transmission choices of data generators and the interest selections of data receivers in networks, we define the data destination distribution for data transmissions within networks. From the perspective of relationships, we take into account the degree of closeness between relationships and define the separation of relationships. Detailed explanations are as follows.

We consider a network consisting of a random number $N$ (with $\bm{\mathbb{E}}[N]=n$) nodes distributed within a square region. The wraparound conditions are applied at the region edges by treating the network as the surface of a two-dimensional torus $\mathcal{O}$ to avoid border effects. For problem simplification, we make the assumption that the number of nodes is exactly $n$, and without impact on the result in order sense, denote the set of nodes $u_{k}$ by $\mathcal{U}$.

\subsubsection*{Network Geography Distribution}\vspace{-0.1in}
Considering networks covering regions with relatively uniform distribution densities. We assume the model to be a \textit{homogeneous random extended network} as \cite{15Franceschetti2007,grimmett1999percolation}.
The definition of network geography distribution can be extended to general network models, such as the clustering random model \cite{mobihoc2014} constructed the following procedure: First, making a ceter of $\mathcal{O}$ as the center point, denoted by $O$. Then, the center point $O$ generates a point process of nodes whose local intensity at position $X$ is given by $\mathbf{d}(Y)=n\cdot \kappa(O,X)$, where $\kappa(O,X)$ is a dispersion density function. Moreover, we assume that $\kappa(O,\cdot)$ is a summable, non-increasing, bounded and continuous function, and $\inf_{\mathcal{O}}\kappa(O,X)dX=1$. Following a common normalizing method,  $\kappa(O,\cdot)$ can be specified by first defining a non-increasing, bounded and continuous function $g(s)$ as $\kappa(O,X)= \frac{g(|X-O|)}{\int_{\mathcal{O}}g(|Y-O|)dY}$. Specifically, the clustering function is defined as $g(s):=\min\{1,s^{-\mathcal{g}}\}$, and $\mathcal{g}$ is the exponent to bound the function. It satisfies the assumption when $\mathcal{g}=0$ and $g(s)=1$.

\subsubsection*{Node Influence Distribution}\vspace{-0.1in}
Real-life scenarios often exhibit a pattern where only a limited number of nodes have high connectivity degrees, while the majority of nodes have relatively lower degrees. For instance, in online social networks, most people are primarily connected with those they know well, while a small subset of users such as celebrities, may have a large number of connections. 
In the assumption, the number of neighbouring nodes $q_k$ for a given node $v_k$ follows a Zipf's distribution \cite{manning1999foundations,sala2010brief}, i.e., $
\Pr(q_k=q)\propto {q^{-\mathcal{i}}}$, where $\mathcal{i}$ serves as an exponent in node influence distribution. Independent of the network system and the identity of its constituents, the probability $\Pr(q)$ that a vertex in the network interacts with $q$ other vertices decays as a power law.
A power-law distribution is a reasonable approach for capturing this characteristic, as it corresponds to the observed pattern of a few nodes exhibiting high degrees, while the majority of nodes show lower connectivity levels. There is a higher probability that it will be linked to a vertex that already has a large number of connections.
\subsubsection*{Data Destination Distribution}\vspace{-0.1in}
The distribution of destinations in network exhibits characteristics similar to Zipf's distribution \cite{manning1999foundations,sala2010brief}, denoted as $r_k$.
Mathematically,
$\mathrm{Pr}(r_{k}=r|q_{k}=q)\propto r^{-\mathcal{d}}, $
where $r_k$ represents the number of destinations for a data transmission session from node $v_k$, while $\mathcal{d}$ denotes the exponent of the data destination distribution. From the perspective of data generators, a typical example is the delivery network, where the destination for data delivery is predetermined at the moment of data generation, and the data generators are responsible for deciding who receives the data. From the perspective of data receivers, a typical example of this scenario is online social networks \cite{mobihoc2014}. Only users genuinely interested in specific content choose to accept and interact with the content shared by central users. The ultimate outcome will introduce a certain degree of randomness in the number of data destinations. This prompts the use of data destination distribution to synthesize these situations.

\subsubsection*{Relationship Separation Distribution}\vspace{-0.1in}
For the relationship separation distribution, let $\mathcal{D}(u, r)$ represent the disk with center at a point $u$ and radius $r$ within the deployment region $\mathcal{O}$, and define $N(u, r)$ as the count of nodes within $\mathcal{D}(u, r)$.
Kleinberg \cite{kleinberg2000navigation} proposed a distance-based social model relating geographical distance and social friendship, while Liben-Nowell et al. \cite{liben2005geographic} introduced the rank-based model. The rank-based model states that the friendship probability depends on both the geographic distance and node density. We use the population-based model \cite{mobihoc2014} by modifying the rank-based model. It is convenient to bound the total length of Euclidean spanning trees in the following derivation. Considering node $v_k$ as the reference point, this distribution can independently select $q_k$ points within the torus region $\mathcal{O}$ using a density function \cite{mobihoc2014}
\begin{equation}\label{equ-beta}
f_{v_k}(X) = \Phi_k\left(S, \mathcal{s}\right) \cdot \left ( \mathbb{E}\left[N(v_k, |X-v_k|) \right] +1 \right )^{-\mathcal{s}} ,
\end{equation}
where the position of a selected point in region $\mathcal{O}$ is denoted by the random variable $X$, and $|X-v_k|$ represents the Euclidean distance between $X$ and $v_k$. The relationship separation exponent is represented by $\mathcal{s}\in [0, \infty)$, and the coefficient $\Phi_k\left(S, \mathcal{s}\right)>0$ depends on both the area $S$ of region $\mathcal{O}$ and $\mathcal{s}$, satisfying
$
	\Phi_k\left(S, \mathcal{s}\right) \cdot \int_{\mathcal{O}} \left ( \mathbb{E}[N(v_k, |X-v_k|)] +1 \right )^{-\mathcal{s}} dX =1.
$

From the population-based model, $q_k$ anchor points are defined as the each points of the neighbour of $v_k$. That is, the $q_k$ friends of $v_k$ are located according to the positions of these corresponding anchor points. The set of anchor points $\mathcal{A}_k=\{p_{k_1},p_{k_2},\cdots,p_{k_{q_{k}}}\}$ is constructed by the following procedure. First, select arbitraily a point from $\mathcal{O}$, as reference point $v_k$. Second, select independently other $q_k$ points $p_{k_i}$ at random according to the density function as described in (\ref{equ-beta}).

Incorporating an exponent enhances the insight into node distribution across distances from a central node, considering the number of hops to reflect relationship proximity. Crucial in social networks where physical distance does not limit connections, and in telecom networks where operational range is limited, this approach helps assess service extent and understand spatial coverage and concentration of network resources by quantifying the clustering exponent.


Employing a model that integrates these four components, it is demonstrated that they are the cause of the power-law scaling of traffic load observed in actual networks. These components have a readily recognizable and significant role in the development of numerous networks, suggesting that the findings are pertinent to a broad spectrum of natural networks.

\section*{Derivation of Traffic Load}\vspace{-0.1in}

We next show that a model based on these four ingredients naturally leads to the scaling behavior of network traffic load, therefore connecting to Metcalfe's Law and its variants.

Starting with the number of node $p_k$ by applying the node influence distribution, the set of anchor points $\mathcal{A}_k$ is constructed from the relationship separation distribution. Since there are not all neighbour node can get the information, the number of transported destinations follow the data destination distribution. Based on $\mathcal{P}_k:=\{v_k\}\cup \mathcal{A}_k$, the Euclidean minimum spanning tree can be constructed, using some classic greedy algorithms like Prim algorithm. It provides a framework that minimizes the total distance between all points in the network while maintaining a connected structure.  

By integrating these methodologies including determining destinations, selecting anchor points, constructing an EMST, a comprehensive and nuanced understanding of the traffic load can be obtained.
It is observed that the traffic load aatisfies that 
$
\mathrm{L}_{\mathrm{N}}^{\ast}= \lambda \cdot\Omega\left( \sum_{k=1}^{n}|\EMST(\{u_{k}\}\cup \mathbb{D}_{k})|\right),
$
where $\lambda$ is the data arrival rate, $\EMST(\cdot)$ denotes the Euclidean minimum spanning tree over a set, and $\mathbb{D}_k$ represents all destinations of $u_k$. To compute this, we consider whether $q_k$ is with a constant level of influence $\Theta(1)$. When $q_k=\Theta(1)$, the Euclidean minimum spanning tree is hold as the aggregated distances for $u_k$ and anchor points set. When $q_k=\omega(1)$, we further consider the $r_k$. For $r_k=\Theta(1)$, the order of result is lower than the condition in $q_k=\Theta(1)$. For $r_k=\omega(1)$, the number of distinations is not a constant level, and we use the summation for all destinations.

Finally, we combine the network geography distribution by assuming the $\mathcal{g}=0$ as uniform distribution and relationship separation distribution by delving into the density function for nearest node set of each source node. From the relationship separation distribution, the total edge length of Euclidean minimum spanning tree is provided. The detailed scaling law of network traffic load is showed in Extended Data Table 2. 

%
%
%
%
%

\begin{figure*}[t]
	\centering
	\subfloat[Sarnoff's Function $y=an+b$]{\begin{minipage}{0.48\linewidth}
			\includegraphics[width=\textwidth]{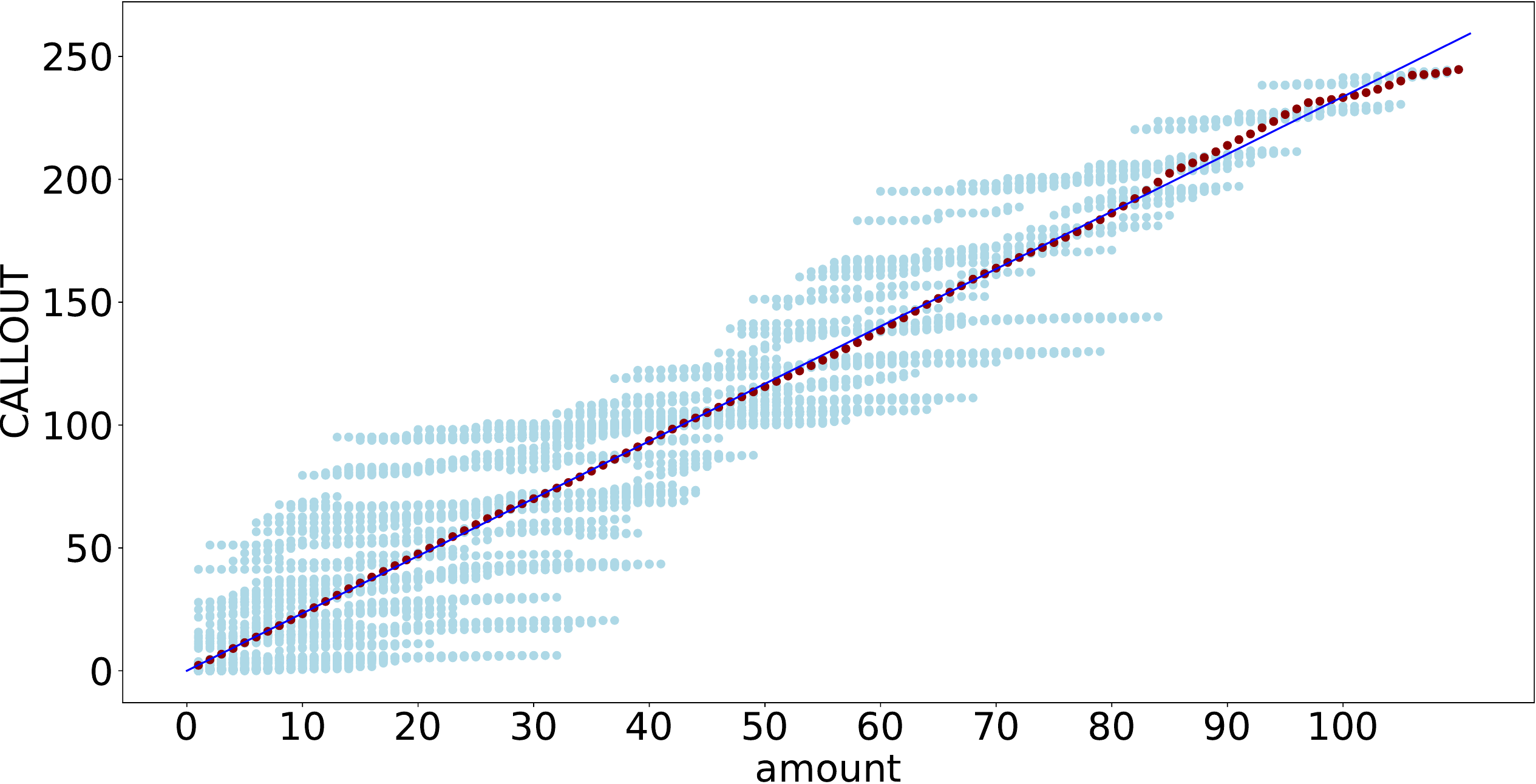}
	\end{minipage}}
\subfloat[Odlyzko's Function $y=an\ln n+bn +c$]{\begin{minipage}{0.48\linewidth}
		\includegraphics[width=\textwidth]{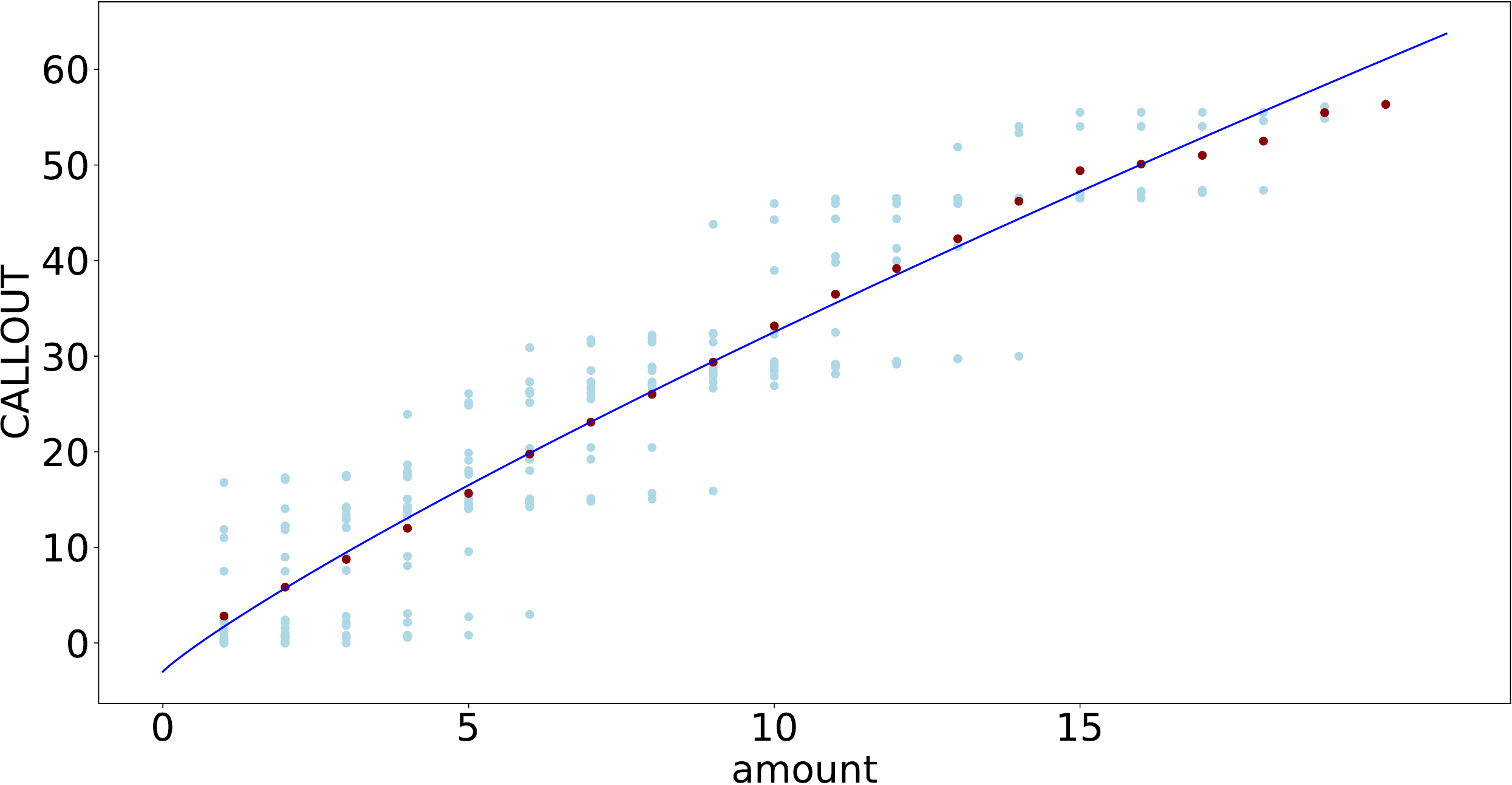}
\end{minipage}}

\subfloat[Metcalfe's Function $y=an^2+bn +c$]{\begin{minipage}{0.48\linewidth}
	\includegraphics[width=\textwidth]{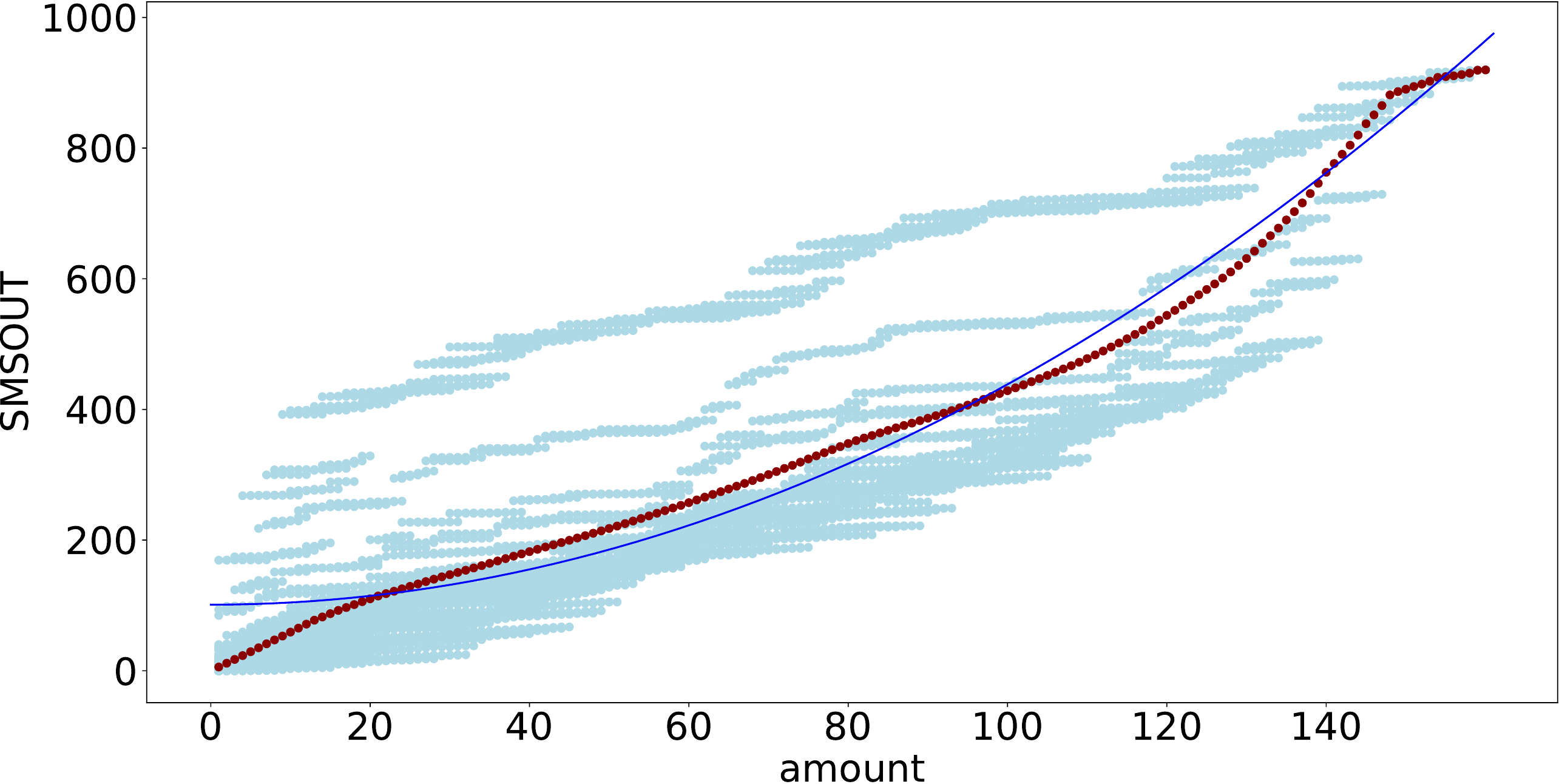}
\end{minipage}}
\subfloat[Cube Function $y=an^3+bn^2 +cn+d$]{\begin{minipage}{0.48\linewidth}
\includegraphics[width=\textwidth]{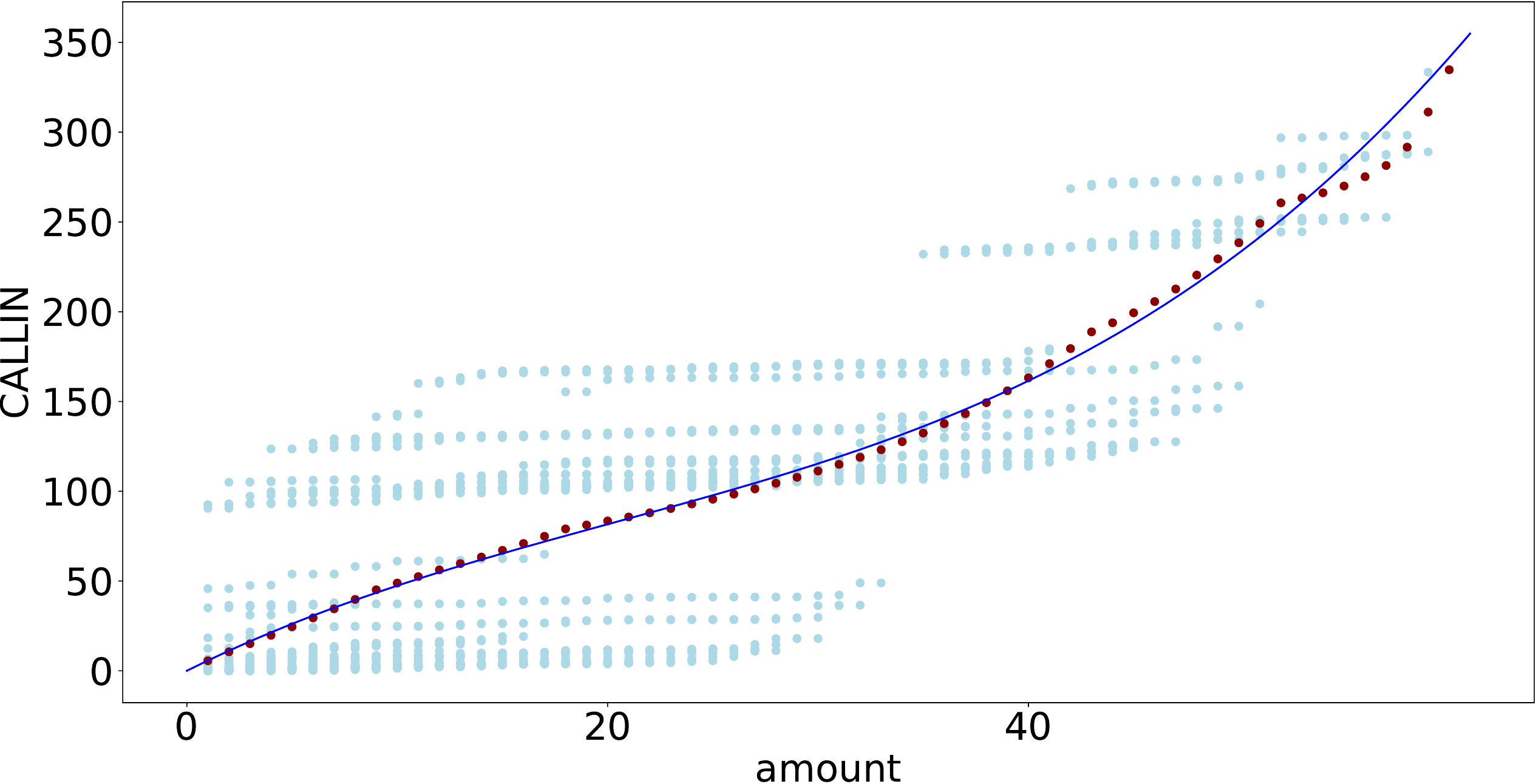}
\end{minipage}}
\caption{The empirical curve fitting results of Metcalfe's law and its variants, which can be seen as $y=an+b,y=an\ln n +bn+c, y=an^2+bn+c, y=an^3+bn^2+cn+d$, corresponding to Sarnoff's function, Odlyzko's function, Metcalfe's function, and Cube function, respectively.}
\label{pic-fit}
\vspace{-0.1in}
\end{figure*}

\section*{Reproducing Metcalfe's Law}
\vspace{-0.1in}

The above distribution are mechanisms common to a number of complex network, including social networks, transportation networks, and so on. Consequently, it is expected that Metcalfe's Law and its variants which have been available to us are the results of the combined action of the above mechanisms.

Some possible cases associated with Metcalfe's Law are summarized and examined in Extended Data Table \ref{tab-law-condition}. It illustrates the scenarios where Metcalfe's Law holds under different values of $\lambda$. Notably, when $\lambda=\Theta(1)$, we observe the consistent alignment with Metcalfe's Law for $\mathcal{s}\geq 0$, $0\leq\mathcal{d}<1$, and $0\leq \mathcal{i}<1$. This suggests that as long as the network has a positive exponent parameter $\mathcal{s}$ and a significant number of active users, Metcalfe's Law is applicable to describe its growth and value creation. A typical example is online social networks, such as Facebook and Tencent, which have extensive user bases and high user engagement.

Additionally, we explore the case of $\lambda=\Theta(n^{1/2})$ and indicate five distinct scenarios yielding similar conclusions. As a matter of fact, with the advancement of the Internet, it has become evident that the scale of networks is progressively expanding, and concurrently, the speed at which data is transmitted within these networks is also increasing. This growth is not just in terms of the number of users or devices connected to the network, but also in the infrastructure that supports the network, such as the routers, switches, and servers that facilitate the exchange of information. For $\mathcal{s}$ within $[0,1)$, the network is characterized as having a more open structure with a larger node base impacting the rate of data arrival. These findings indicate that Metcalfe's Law still holds in this state.

%

To validate the principles of Metcalfe's Law, Zhang et al. \cite{zhang2015tencent} conducted an empirical research study utilizing data from two prominent social network entities, Facebook and Tencent. Through the research, we emphasize the wide-ranging applicability of Metcalfe's Law, as it effectively captures network growth across diverse parameter spaces. The study showcases how the law's underlying principles hold in the context of various social networks, serving as a valuable tool for understanding and predicting network behavior and value.

Metcalfe's Law and its variations are verified using the dataset for real-world scenarios \cite{barlacchi2015multi}, revealing distinct functional relationships in traffic under varying node quantities and diverse scenarios. Conducting random selections and averaging for different node quantities, Metcalfe's Law and its variants are successfully fitted, as shown in Figure 2. By combining our results with previous findings \cite{zhang2023facebook, zhang2015tencent, briscoe2006metcalfe}, we assert the empirical validation of these laws.




\section*{Conclusion}\vspace{-0.1in}

This study primarily focuses on providing mechanisms and models behind Metcalfe's Law and its variants from the perspective of network traffic load. To address the limits of these laws, the potential theoretical conditions are established under which Metcalfe's Law and its variants hold and correlate these laws with real-world scenarios. The analysis reveals that these laws are applicable in distinct scenarios and have their unique significance. Particularly, we present the theoretical results across the complete parameter space for a general network model, offering valuable insights for exploring scenarios associated with the discovery of new scaling laws beyond the ones mentioned above in the future.

\newpage
\section{Methods}\label{Theo_Ana}

\subsection{Definition of Traffic Load}\label{Def-Tra}
\noindent\textbf{Data Arrival Rate.} Numerous studies have investigated the data transmission process within networks  \cite{karagiannis2004nonstationary,terdik2008levy,perera2010twitter,benevenuto2009characterizing} and we follow their discovery. The behavior of data arrival at a node follows a Poisson process.
We  denote the set of nodes by $\mathcal{U}=\{u_1,u_2,\cdots,u_n\}$. The rate vector is defined
\begin{equation*}
	\label{poisson rate}
	\bm{\lambda}=(\lambda_{1}, \lambda_{2}, \cdots, \lambda_{n}),
\end{equation*}
where $\lambda_{k}$ is the rate of the Poisson process for node $u_{k}$, $k=1, 2, \cdots, n$, called \textit{data arrival rate}.
For problem simplification, we assume that all $\lambda_{k}$ are in the same order, denoted as $\lambda$. It is reasonable that the data arrival rate would be the same order owing to the bandwidth and content delivery in the same network.


\noindent\textbf{Data Transport Distance.}
Denote a network $\mathrm{N}$ with a transportation scheme $\mathrm{S}$ as $\mathrm{S}_\mathrm{N}$, define a distance vector
\[\bm{d}_{\mathrm{S}_\mathrm{N}}=(d_{{\mathrm{S}_\mathrm{N}},u_1},d_{{\mathrm{S}_\mathrm{N}},u_2},\cdots,d_{{\mathrm{S}_\mathrm{N}},u_n}),\]
where
$d_{{\mathrm{S}_\mathrm{N}},u_k}$ indicates the distance data transporting from $u_{k}$ to destinations within a data transportation session $\mathrm{T}_{k}$. For the session $\mathrm{T}_{k}$, the source node $u_k$ delivers message to the selected neighbours.
The transport distance is influenced by the unique structures of communication networks and transmission methods.

\noindent\textbf{Traffic Load.}
Given a network $\mathrm{N}$, define the traffic load for $\mathrm{T}_{k}$, as
\[
\mathrm{L}_{\mathrm{N}}^{\mathrm{T}_k}=\lambda_k\cdot d_{{\mathrm{S}_\mathrm{N}},u_k}.\] 
As a result, the traffic load across all data transporatation schemes can be defined as
\begin{equation}\label{equ-transport-load}
	\mathrm{L}_{\mathrm{N}}^{\ast}= \min\nolimits_{\mathrm{T}_\mathrm{k}\in \mathbb{T}, \mathrm{S}_\mathrm{N}\in \mathbb{S}} \bm{\lambda}  \ast \bm{d}_{\mathrm{S}_\mathrm{N}},
\end{equation}
where $\mathbb{S}$ represents the set of all possible transmission schemes, $\mathbb{T}$ represents the set of all data transportation sessions regarding of $u_k$, and $\ast$ denotes an inner product.
It is shown that the traffic load is defined as the product of data arrival rate and data transport distance.

\subsection{Theoretical Analysis}
In this section, we delve into a detailed analysis of traffic load quantification in the network. We present a comprehensive model that allows us to quantify the traffic load and derive meaningful insights. To validate these findings, we provide a proof to support the conclusions.

\subsection{Basic Lemma to Derive the Traffic Load}

\begin{lem}[Minimal Spanning Tree \cite{steele1988growth}]\label{lem-growthsteel}
	Let $X_i$, $1\leq i <\infty$, denote independent random variables with
	values in $\mathbb{R}^d$, $d\geq 2$, and let $M_n$ denote the cost of
	a minimal spanning tree of a complete graph with vertex set $\{X_i\}_{i=1}^n$, where the cost of an edge $(X_i, X_j)$ is given by $\Psi((|X_i-X_j|))$.
	Here, $|X_i-X_j|$ denotes the Euclidean distance between $X_i$ and $X_j$ and
	$\Psi$ is a monotone function. For bounded random variables and $0<\sigma<d$,
	it holds that as $n\to \infty$, with probability $1$, one has
	\[
	M_n\sim c_1(\sigma,d)\cdot n^{\frac{d-\sigma}{d}} \cdot \int_{\mathbb{R}^d}f(X)^{\frac{d-\sigma}{d}}d X,\]
	provided $\Psi(x)\sim x^\sigma$, where $f(X)$ is the density of the absolutely continuous part of the distribution of the $\{X_i\}$.
\end{lem}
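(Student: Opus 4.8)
The plan is to recognize $M_n$ as an instance of Steele's theory of subadditive Euclidean functionals and to prove the claim in two stages: first for the uniform law on the unit cube, then for a general absolutely continuous law by a blocking (Riemann-sum) argument. Write $L(x_1,\dots,x_n)$ for the MST cost with edge weight $\Psi(|x_i-x_j|)$. The first move is to replace $\Psi$ by $x^\sigma$: since the MST edge lengths shrink to $0$ as $n\to\infty$ for points in a bounded region, only the behaviour of $\Psi$ near the origin matters, and $\Psi(x)\sim x^\sigma$ (together with monotonicity) lets me control the relative error uniformly. The resulting $x^\sigma$ functional is translation invariant and homogeneous of order $\sigma$, i.e. $L(\alpha x_1,\dots,\alpha x_n)=\alpha^\sigma L(x_1,\dots,x_n)$, because scaling the point set leaves the combinatorial MST unchanged (the ordering of edge weights is preserved) and multiplies every weight by $\alpha^\sigma$.

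The crucial structural estimate is near-subadditivity. Partitioning $[0,t]^d$ into $m^d$ subcubes $Q_j$ of side $t/m$, the global MST can be assembled from the MSTs of the subcubes plus $O(m^d)$ bridging edges, each of length $O(t/m)$, so
\[
L\bigl(\{x_i\}\cap[0,t]^d\bigr)\le\sum_j L\bigl(\{x_i\}\cap Q_j\bigr)+C\,t^\sigma m^{d-\sigma}.
\]
A matching superadditive companion (the boundary-rooted MST, in which points may connect for free along subcube faces) supplies the reverse inequality up to the same boundary order. The hypothesis $0<\sigma<d$ is exactly what makes the boundary exponent $m^{d-\sigma}$ negligible relative to the bulk term after normalization by $n^{(d-\sigma)/d}$.

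Next I would establish concentration. Inserting or deleting a single point changes $M_n$ by at most a constant depending on the diameter of the support (the new point can always be attached by one short edge, and rewiring only decreases cost), so the Doob martingale of $L$ with respect to the filtration generated by $X_1,\dots,X_n$ has bounded differences, and Azuma's inequality yields exponential concentration of $M_n$ about its mean. Combined with the subadditive/superadditive envelope and Borel--Cantelli, this upgrades convergence in mean to almost-sure convergence, giving for $X_i$ i.i.d.\ uniform on $[0,1]^d$ the limit $M_n/n^{(d-\sigma)/d}\to c_1(\sigma,d)$, which defines the constant. To pass to a general density $f$, I would partition the bounded support into small cubes on which $f$ is nearly constant: a cube of volume $v$ centred at $X$ contains $\approx n f(X) v$ points, and applying the uniform result inside it together with homogeneity contributes $\approx c_1(\sigma,d)\,v^{\sigma/d}\bigl(n f(X) v\bigr)^{(d-\sigma)/d}=c_1(\sigma,d)\,n^{(d-\sigma)/d} f(X)^{(d-\sigma)/d} v$. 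Summing over cubes is a Riemann sum for $c_1(\sigma,d)\,n^{(d-\sigma)/d}\int f(X)^{(d-\sigma)/d}\,dX$, and the singular part of the law places asymptotically no mass in the bulk scaling, contributing only lower order—hence only the absolutely continuous part $f$ survives.

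I expect the main obstacle to be the lower bound, namely constructing the superadditive companion functional and proving that the sub- and superadditive normalizations share a common limit, together with making the blocking argument for non-uniform $f$ rigorous. Concretely, one must control the fluctuation of the point counts across cells and the accumulated bridging cost so that the per-scale boundary terms of order $m^{d-\sigma}$ genuinely vanish against the bulk term of order $n^{(d-\sigma)/d}$ after first sending $n\to\infty$ and then refining the partition; the interchange of these two limits is where the delicate uniform control is needed.
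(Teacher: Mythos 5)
The paper gives no proof of this lemma: it is quoted as a known external result of Steele \cite{steele1988growth} and used as a black box, so your attempt can only be measured against the literature proof (Steele 1988, and the later systematization in Yukich's theory of subadditive Euclidean functionals). In its architecture your proposal does follow that route: reduction of $\Psi$ to $x^\sigma$ via the shrinking of MST edge lengths, geometric subadditivity over a cube partition with boundary cost $O(t^\sigma m^{d-\sigma})$, a superadditive boundary-rooted companion, and a blocking/Riemann-sum passage from the uniform law to a general absolutely continuous law, with the singular part contributing lower order.

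There is, however, a concrete gap in your concentration step, and it kills half of the claimed range of $\sigma$. Your bounded-differences constant is the worst-case add-one cost, which is of order $\mathrm{diam}^\sigma=\Theta(1)$ per point (the deletion direction works because MST degrees are bounded by a kissing-number constant, so this two-sided bound is fine). Azuma then gives fluctuations of order $\sqrt{n}$, but the quantity you must concentrate around is $\Theta(n^{(d-\sigma)/d})$, and $\sqrt{n}=o\bigl(n^{(d-\sigma)/d}\bigr)$ only when $\sigma<d/2$. Setting $t=\varepsilon n^{(d-\sigma)/d}$ in the Azuma tail yields
\begin{equation*}
\Pr\bigl(|M_n-\mathbb{E}[M_n]|>t\bigr)\le 2\exp\Bigl(-c\,\varepsilon^2\, n^{\frac{2(d-\sigma)}{d}-1}\Bigr),
\end{equation*}
whose exponent tends to $0$ for $\sigma>d/2$, so the bound is asymptotically trivial and Borel--Cantelli gives nothing; at $\sigma=d/2$ it is a nonsummable constant. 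The known proofs do not rest on worst-case bounded differences: one needs either Efron--Stein-type variance bounds exploiting the \emph{typical} add-one cost $n^{-\sigma/d}$ combined with the subadditive structure and smoothness-in-mean along subsequences, or sharper isoperimetric/martingale concentration in the style of Rhee--Talagrand. A second, smaller inaccuracy: your parenthetical ``rewiring only decreases cost'' suggests monotonicity, but the power-weighted MST functional is not monotone (adding a point can raise or lower the cost); what you actually need and have is the two-sided add-one bound, which, as just explained, is too weak for $\sigma\ge d/2$.
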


\begin{lem}[Kolmogorov's Strong LLN \cite{williams1991probability}]\label{lem-Kolmogorov-slln}
	Let $\{X_n\}$  be an i.i.d. sequence of random variables having finite mean: For $\forall n$,
	$\mathbb{E}[X_n] < \infty$.
	Then, a strong law of large numbers (LLN) applies to the sample mean:
	\[\bar{X}_n \stackrel{a.s.}{\longrightarrow} \mathbb{E}[X_n],\]
	where  $\stackrel{a.s.}{\longrightarrow}$ denotes \emph{almost sure convergence}.
\end{lem}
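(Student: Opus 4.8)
The plan is to establish the almost-sure convergence $\bar{X}_n \to \mu$, where $\mu := \mathbb{E}[X_1]$ denotes the common mean (the i.i.d.\ hypothesis forces all $X_n$ to share a single mean $\mu$, and the phrase ``finite mean'' is to be read as $\mathbb{E}|X_1| < \infty$, with the limit in the conclusion being this constant $\mu$). Since each $X_k = X_k^+ - X_k^-$, it suffices by linearity to treat the nonnegative case $X_k \ge 0$. The central device is \emph{truncation}: I would set $Y_k := X_k\,\mathbf{1}_{\{X_k \le k\}}$ and compare the partial sums $S_n := \sum_{k=1}^n X_k$ and $T_n := \sum_{k=1}^n Y_k$.

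First I would show that truncation is harmless. Because the $X_k$ are identically distributed, $\sum_{k\ge 1}\Pr(X_k \ne Y_k) = \sum_{k\ge 1}\Pr(X_1 > k) \le \mathbb{E}[X_1] < \infty$, so the Borel--Cantelli lemma gives $X_k = Y_k$ for all sufficiently large $k$ almost surely; hence $n^{-1}(S_n - T_n) \to 0$ a.s., and the problem reduces to analyzing the truncated sum $T_n$.

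Next I would control the fluctuations of $T_n$ about its mean. The key estimate is the variance bound $\sum_{k\ge 1} k^{-2}\Var(Y_k) \le \sum_{k\ge 1} k^{-2}\,\mathbb{E}\!\left[X_1^2\,\mathbf{1}_{\{X_1\le k\}}\right] < \infty$, which follows by interchanging summation and expectation (Fubini--Tonelli) together with the elementary tail bound $\sum_{k\ge x} k^{-2} \le 2/x$ and the hypothesis $\mathbb{E}[X_1] < \infty$. From here there are two standard routes to almost-sure convergence of the centered average $n^{-1}(T_n - \mathbb{E}T_n)$: either (i) invoke Kolmogorov's maximal inequality to obtain a.s.\ convergence of the series $\sum_k (Y_k - \mathbb{E}Y_k)/k$ and then apply Kronecker's lemma; or (ii) follow Etemadi, using Chebyshev's inequality along a geometric subsequence $n_j = \lfloor \alpha^j \rfloor$ with $\alpha > 1$ and Borel--Cantelli, and interpolating to general $n$ via the monotonicity of $T_n$ for nonnegative summands. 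Either route yields $n^{-1}(T_n - \mathbb{E}T_n) \to 0$ a.s. Finally, dominated convergence gives $\mathbb{E}[Y_k] = \mathbb{E}[X_1\,\mathbf{1}_{\{X_1\le k\}}] \to \mu$, so $n^{-1}\mathbb{E}T_n \to \mu$ by the Ces\`aro mean, and combining the pieces yields $\bar{X}_n \to \mu$ almost surely.

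The main obstacle is precisely the upgrade from convergence in probability (which Chebyshev's inequality delivers cheaply once variances are finite) to \emph{almost-sure} convergence under only a first-moment assumption. Because finite variance is not assumed, the truncation and the accompanying variance summation above, followed by either Kolmogorov's maximal inequality or the monotone-subsequence interpolation, is what does the genuine work; the remaining steps are routine moment bookkeeping.
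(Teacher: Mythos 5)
Your proposal is correct: the truncation $Y_k = X_k\,\mathbf{1}_{\{X_k\le k\}}$, the Borel--Cantelli comparison of $S_n$ and $T_n$, the variance bound $\sum_k k^{-2}\Var(Y_k)<\infty$, and then either Kolmogorov's maximal inequality with Kronecker's lemma or Etemadi's geometric-subsequence interpolation is the standard, complete proof of the strong law, and your reading of the loosely stated hypothesis (finite mean as $\mathbb{E}|X_1|<\infty$, the limit being the common mean $\mu$) is the right one. The paper itself gives no proof of this lemma---it is quoted as a classical result with a citation to Williams---and your route (i) (truncation, a.s.\ convergence of $\sum_k (Y_k-\mathbb{E}Y_k)/k$, Kronecker) is essentially the proof appearing in that cited reference, so there is no substantive divergence to report.
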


\begin{lem}\label{basic_lemma_emst} For $k=1,2,\cdots,n$, it holds that
	\begin{equation*}
		\begin{aligned}
			\mathrm{L}_{\mathrm{N}}^{\ast}= \lambda \cdot\Omega\left( \sum_{k=1}^{n}|\EMST(\{u_{k}\}\cup \mathbb{D}_{k})|\right),
		\end{aligned}
	\end{equation*}
	where $\EMST(\cdot)$ denotes the Euclidean minimum spanning tree over a set.
	
\end{lem}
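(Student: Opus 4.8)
The plan is to start from the definition of the aggregate traffic load in equation (\ref{equ-transport-load}) and reduce it to a sum of per-session routing costs, each of which I would then lower-bound by the corresponding Euclidean minimum spanning tree. Writing the inner product explicitly, $\bm{\lambda} \ast \bm{d}_{\mathrm{S}_\mathrm{N}} = \sum_{k=1}^{n} \lambda_k\, d_{{\mathrm{S}_\mathrm{N}},u_k}$, and invoking the standing assumption that every $\lambda_k$ is of the same order $\lambda$, I would pull out the common factor to obtain $\mathrm{L}_{\mathrm{N}}^{\ast} = \Theta(\lambda)\cdot \min_{\mathrm{S}_\mathrm{N}} \sum_{k=1}^{n} d_{{\mathrm{S}_\mathrm{N}},u_k}$. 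Because sessions originating at distinct sources can be routed independently, the scheme minimization separates across terms, so it suffices to bound $\min_{\mathrm{S}_\mathrm{N}} d_{{\mathrm{S}_\mathrm{N}},u_k}$ for a single source $u_k$ delivering to its destination set $\mathbb{D}_{k}$.

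The core step is the geometric lower bound for one session. Here I would observe that any admissible transportation scheme $\mathrm{S}_\mathrm{N}$ delivering the message of $u_k$ to every node of $\mathbb{D}_{k}$ induces a connected structure in the plane whose vertex set contains $\{u_{k}\}\cup \mathbb{D}_{k}$ (possibly together with relay nodes acting as Steiner points), and whose total edge length equals the aggregated transport distance $d_{{\mathrm{S}_\mathrm{N}},u_k}$ up to routing overhead. The minimum total length of such a connected structure is by definition the Euclidean Steiner minimal tree over $\{u_{k}\}\cup\mathbb{D}_{k}$; when no relays are used it is already a spanning tree on exactly those points and dominates the EMST directly. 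Invoking the Steiner ratio in the Euclidean plane — where any fixed positive lower bound on the Steiner-tree-to-MST length ratio is enough for an $\Omega$ claim — the optimal delivery structure has length at least a constant multiple of the EMST, so $\min_{\mathrm{S}_\mathrm{N}} d_{{\mathrm{S}_\mathrm{N}},u_k} = \Omega\bigl(|\EMST(\{u_{k}\}\cup\mathbb{D}_{k})|\bigr)$.

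Combining the two steps, I would sum the per-session bounds over $k=1,\dots,n$ and reinsert the factor $\lambda$, absorbing the $\Theta(\lambda)$ constant into the asymptotic notation, to conclude $\mathrm{L}_{\mathrm{N}}^{\ast} = \lambda\cdot\Omega\bigl(\sum_{k=1}^{n}|\EMST(\{u_{k}\}\cup\mathbb{D}_{k})|\bigr)$, as claimed. The main obstacle I anticipate is making the single-session geometric argument rigorous: one must argue that \emph{any} feasible delivery scheme, whatever multicasting or relaying it employs, cannot have total transport distance asymptotically below the spanning tree, which is precisely where the connectedness of the delivery structure and the constant-factor Steiner ratio enter. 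A secondary subtlety is the termwise separation of the scheme minimization; since allowing sessions to share routing infrastructure could only decrease the total cost, the $\Omega$ lower bound is unaffected, but this point should be stated explicitly so that the decomposition into a sum of EMST contributions is justified.
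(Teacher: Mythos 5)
Your proposal is correct and follows essentially the same route as the paper: the paper defers the proof of this lemma to Lemma~1 of \cite{mobihoc2014}, but the argument it sketches in the proof of Theorem~\ref{Z-D-thm-traffic-load-on-sessions} is precisely your reduction --- any feasible delivery scheme induces a connected structure spanning $\{u_{k}\}\cup \mathbb{D}_{k}$, whose minimum total length is the Euclidean Steiner tree \cite{gilbert1968steiner}, which is within a constant factor of the EMST, after which one sums over sessions and factors out $\lambda$. Your write-up is in fact more explicit than the paper's about the inner-product decomposition, the per-session separation of the minimization, and the role of the Steiner ratio, but it is the same proof idea.
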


See Lemma 1 of \cite{mobihoc2014} for the proof of this lemma.

\begin{lem}\label{lem-Pr}
	Consider the Zipf's distribution whose distribution function like $\Pr(q_k=q)={\left(\sum\nolimits_{j=1}^{n-1}j^{-\alpha}\right)^{-1}} \cdot {q^{-\alpha}}$, where $\alpha$ is a parameter to describe the distribution, we get that
	\begin{equation}\label{eq-Pr}
		\Pr(q_k=q)= \left\{
		\begin{array}{ll}
			\Theta\left( q^{-\alpha} \right)  , &  \alpha>1;  \\
			\Theta\left(\frac{1}{\log n}\cdot q^{-1}  \right), &  \alpha=1;  \\
			\Theta\left(n^{\alpha-1}\cdot q^{-\alpha} \right)  , & 0< \alpha< 1.
		\end{array}
		\right.
	\end{equation}
\end{lem}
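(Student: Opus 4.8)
The plan is to observe that the lemma reduces entirely to estimating the normalizing constant $Z_n := \sum_{j=1}^{n-1} j^{-\alpha}$. By the definition of the Zipf distribution we have $\Pr(q_k=q) = Z_n^{-1}\, q^{-\alpha}$, and the factor $q^{-\alpha}$ already appears explicitly in every branch of the claimed result. Hence it suffices to show that $Z_n = \Theta(1)$ when $\alpha>1$, that $Z_n = \Theta(\log n)$ when $\alpha=1$, and that $Z_n = \Theta\!\left(n^{1-\alpha}\right)$ when $0<\alpha<1$; dividing $q^{-\alpha}$ by these three orders produces the three asymptotics stated in \eqref{eq-Pr}. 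The workhorse throughout is the comparison of a monotone sum with its corresponding integral.

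First I would treat $\alpha>1$. Here the infinite series $\sum_{j=1}^{\infty} j^{-\alpha}$ converges to the finite positive constant $\zeta(\alpha)$, so the monotone increasing partial sums $Z_n$ are bounded between their first term $1$ and $\zeta(\alpha)$. This immediately gives $Z_n=\Theta(1)$ and therefore $\Pr(q_k=q)=\Theta(q^{-\alpha})$. Next, for $\alpha=1$, the constant $Z_n$ is exactly the harmonic number $H_{n-1}=\sum_{j=1}^{n-1} j^{-1}$, and the standard sandwich of this sum against $\int_{1}^{n} x^{-1}\,dx = \log n$ yields $H_{n-1}=\log n + \Theta(1) = \Theta(\log n)$, which gives the middle branch $\Pr(q_k=q)=\Theta\!\left(\frac{1}{\log n}\, q^{-1}\right)$.

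For the remaining range $0<\alpha<1$, I would apply the integral comparison to the decreasing function $x^{-\alpha}$, bounding $Z_n$ below by $\int_{1}^{n} x^{-\alpha}\,dx$ and above by $1+\int_{1}^{n-1} x^{-\alpha}\,dx$. Since $\int_{1}^{m} x^{-\alpha}\,dx = \frac{m^{1-\alpha}-1}{1-\alpha}$ with $1-\alpha>0$, both the lower and upper bounds are of order $n^{1-\alpha}$, so $Z_n = \Theta\!\left(n^{1-\alpha}\right)$ and hence $\Pr(q_k=q)=\Theta\!\left(n^{\alpha-1}\, q^{-\alpha}\right)$, completing the last branch.

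Honestly, there is no deep obstacle in this lemma: the whole argument is a careful asymptotic evaluation of a power sum via integral comparison together with the well-known behavior of $\zeta(\alpha)$ and of the harmonic series. The only point requiring genuine care is to produce \emph{matching} upper and lower bounds in each regime so that the $\Theta$ (rather than a weaker $O$ or $\Omega$) is justified, and to isolate the leading-order term cleanly from the boundary contributions of the integral estimates. This is benign here precisely because $\Theta$ absorbs all multiplicative constants and lower-order additive terms, so the constant prefactor $\frac{1}{1-\alpha}$ in the $0<\alpha<1$ case and the $O(1)$ correction in the harmonic case never need to be tracked explicitly.
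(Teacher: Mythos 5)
Your proof is correct and complete: reducing the lemma to the asymptotics of the normalizing constant $Z_n=\sum_{j=1}^{n-1}j^{-\alpha}$ and then establishing $Z_n=\Theta(1)$ for $\alpha>1$, $Z_n=\Theta(\log n)$ for $\alpha=1$, and $Z_n=\Theta\left(n^{1-\alpha}\right)$ for $0<\alpha<1$ via integral comparison yields all three branches of \eqref{eq-Pr}. Note that the paper itself states this lemma without any proof, treating it as a standard fact about partial sums of power series, so your argument is precisely the omitted justification rather than an alternative to an existing one.
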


\subsection{Euclidean Minimum Spanning Tree}
Denote a network session by an ordered pair $\mathrm{T}_k=<v_{k}, \mathcal{A}_k>$, where $ v_{k} $ is the source and each element $ v_{k_{i}} $ in $\mathcal{A}_k=\{v_{k_i}\}_{i=1}^{r_k}$ is the nearest node to the corresponding $ p_{k_{i}} $ in $ \mathcal{d}^\mathrm{I}_k=\{p_{k_i}\}_{i=1}^{r_k} $, the random variable $r_{k}$ denotes the number of potential destinations for session $\mathrm{T}_k$, i.e.,
The destination for node $v_k$ is dynamically selected based on the data being transported within the network during the session. 
We call point $p_{k_{i}}$ the \emph{anchor point} of $v_{k_{i}}$, and define a set $\mathcal{P}_{k} := \{v_{k}\} \cup \mathcal{A}_k$. Then, we can get the following lemma.

\begin{lem}\label{lem-EMST-B-D} For a network broadcast session $\mathrm{T}_k$, when $ r_{k}=\omega(1) $, with probability $1$, it holds that
	\begin{equation*}
		|\EMST(\mathcal{A}_k)|=\Theta(L_{\mathcal{P}}(\mathcal{s}, r_k)),
	\end{equation*}
	and then
	\begin{equation*}
		|\EMST(\mathcal{P}_{k})|=\Omega(L_{\mathcal{P}}(\mathcal{s}, r_k)),
	\end{equation*}
	where $\EMST(\cdot)$ denotes the Euclidean minimum spanning tree over a set,
	\begin{equation}\label{eq-L-P-beta-9765}
		L_{\mathcal{P}}(\mathcal{s}, r_k) = \left\{
		\begin{array}{ll}
			\Theta\left( \sqrt{r_k} \right)  , &  \mathcal{s}>2;  \\
			\Theta\left(\sqrt{r_k} \cdot \log n  \right), &  \mathcal{s}=2;  \\
			\Theta\left(\sqrt{r_k} \cdot n^{1-\frac{\mathcal{s}}{2}} \right)  , & 1< \mathcal{s} < 2;  \\
			\Theta\left(\sqrt{r_k} \cdot \sqrt{\frac{n}{\log n}} \right) , &  \mathcal{s}=1;  \\
			\Theta\left(\sqrt{r_k} \cdot \sqrt{n} \right)  , &  0\leq\mathcal{s} <1.
		\end{array}
		\right.
	\end{equation}
\end{lem}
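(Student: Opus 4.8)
The plan is to apply Steele's growth theorem (Lemma~\ref{lem-growthsteel}) directly to the anchor set $\mathcal{A}_k$, which (after the negligible nearest-node rounding from $p_{k_i}$ to $v_{k_i}$) consists of $r_k$ points drawn independently from the density $f_{v_k}$ of \eqref{equ-beta} on the two-dimensional torus $\mathcal{O}$. Since the Euclidean cost is $\Psi(x)=x$, I would take $d=2$ and $\sigma=1$, so that $0<\sigma<d$ and Lemma~\ref{lem-growthsteel} gives, with probability $1$ as $r_k\to\infty$ (guaranteed by $r_k=\omega(1)$),
\[
|\EMST(\mathcal{A}_k)|\sim c_1(1,2)\cdot\sqrt{r_k}\cdot\int_{\mathcal{O}}\sqrt{f_{v_k}(X)}\,dX .
\]
Everything then reduces to estimating the integral $I_{\mathcal{s}}:=\int_{\mathcal{O}}\sqrt{f_{v_k}(X)}\,dX$ in order of $n$, and the five regimes of \eqref{eq-L-P-beta-9765} should emerge from this single quantity.

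To evaluate $I_{\mathcal{s}}$ I would use the uniform extended-network assumption ($\mathcal{g}=0$, unit density) to write $\mathbb{E}[N(v_k,\rho)]=\Theta(\rho^2)$ for $\rho$ up to the torus scale $\Theta(\sqrt n)$, so that the substitution $u=\mathbb{E}[N(v_k,\rho)]+1$ turns every radial integral $\int_{\mathcal{O}}(\mathbb{E}[N(v_k,|X-v_k|)]+1)^{-a}\,dX$ into $\Theta\!\left(\int_1^{\Theta(n)}u^{-a}\,du\right)$. With $a=\mathcal{s}$ this fixes the normalizing constant through its defining relation, giving $\Phi_k=\Theta(1)$ for $\mathcal{s}>1$, $\Theta(1/\log n)$ for $\mathcal{s}=1$, and $\Theta(n^{\mathcal{s}-1})$ for $0\le\mathcal{s}<1$. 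With $a=\mathcal{s}/2$ the remaining factor equals $\Theta(1),\Theta(\log n),\Theta(n^{1-\mathcal{s}/2})$ according to whether $\mathcal{s}>2$, $\mathcal{s}=2$, or $\mathcal{s}<2$. Multiplying $\sqrt{\Phi_k}$ by this factor and collecting the cross-overs at $\mathcal{s}=2,1$ reproduces the five branches of $L_{\mathcal{P}}(\mathcal{s},r_k)$; in particular the $\mathcal{s}=1$ branch $\sqrt{n/\log n}$ comes from $\sqrt{1/\log n}\cdot\sqrt n$, and in the range $0\le\mathcal{s}<1$ the powers $n^{(\mathcal{s}-1)/2}$ and $n^{1-\mathcal{s}/2}$ cancel to leave $\sqrt n$.

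For the second assertion I would derive $|\EMST(\mathcal{P}_k)|=\Omega(L_{\mathcal{P}})$ from the first by a perturbation argument on the added source $v_k$. Taking the minimum spanning tree $T$ on $\mathcal{P}_k=\{v_k\}\cup\mathcal{A}_k$ and deleting $v_k$ splits $T$ into at most six subtrees, using the standard fact that every vertex of a planar Euclidean MST has degree at most six; reconnecting these subtrees into a spanning tree of $\mathcal{A}_k$ costs at most five edges, each bounded through the triangle inequality by the two $T$-edges it bypasses at $v_k$. Since those edges are part of $T$, this yields $|\EMST(\mathcal{A}_k)|\le(1+C)\,|T|$ for an absolute constant $C$, whence $|\EMST(\mathcal{P}_k)|=\Omega(|\EMST(\mathcal{A}_k)|)=\Omega(L_{\mathcal{P}}(\mathcal{s},r_k))$.

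The main obstacle I anticipate is handling two asymptotic parameters at once: Lemma~\ref{lem-growthsteel} governs the $r_k\to\infty$ limit for a \emph{fixed} density, whereas $f_{v_k}$ itself carries the $n$-dependence that generates the $\log n$ and $n^{1-\mathcal{s}/2}$ factors. I would resolve this by invoking Steele's theorem at each fixed $n$ to extract the $\sqrt{r_k}$ scaling and then tracking all $n$-dependence inside $I_{\mathcal{s}}$, so the final $\Theta$-statement is read jointly in $n$ and $r_k$. A secondary delicate point is the precise location of the thresholds $\mathcal{s}=2$ and $\mathcal{s}=1$ and the interplay between $\Phi_k$ and the half-power integral, which is exactly where the logarithmic corrections in the $\mathcal{s}=2$ and $\mathcal{s}=1$ cases arise and must be carried explicitly rather than hidden in constants.
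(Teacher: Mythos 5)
Your proposal is correct and, for the main assertion, follows exactly the paper's route: the paper also (implicitly, via its Lemma~\ref{lem-growthsteel} with $d=2$, $\sigma=1$) writes $L_{\mathcal{P}}(\mathcal{s},r_k)=\sqrt{r_k}\cdot\int_{\mathcal{O}}\sqrt{f_{X_0}(X)}\,dX$, computes the normalizing constant in the same three regimes ($\Theta(1)$ for $\mathcal{s}>1$, $\Theta(1/\log n)$ for $\mathcal{s}=1$, $\Theta(n^{\mathcal{s}-1})$ for $0\le\mathcal{s}<1$, via its Lemma~\ref{lem-Pr}), and then evaluates $\int_{\mathcal{O}}\bigl(|X-X_0|^2+1\bigr)^{-\mathcal{s}/2}dX$ with the same crossovers at $\mathcal{s}=2$ and $\mathcal{s}=1$, multiplying the pieces to obtain the five branches of \eqref{eq-L-P-beta-9765}; your substitution $u=\mathbb{E}[N(v_k,\rho)]+1$ is just a cleaner way of organizing the identical radial integrals. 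Where you genuinely depart from the paper is the second assertion: the paper deduces $|\EMST(\mathcal{P}_k)|=\Omega(L_{\mathcal{P}})$ from the bare inequality $|\EMST(\mathcal{P}_{k})|\geq |\EMST(\mathcal{A}_k)|$, but that inequality is not literally true --- adding a point can shorten a Euclidean MST (the Steiner effect; e.g., adjoining the Fermat point of an equilateral triangle reduces the MST length by a factor approaching $\sqrt{3}/2$), so monotonicity of EMST length under set inclusion fails with constant $1$, and the paper's one-line justification needs repair, e.g.\ via the planar Steiner ratio. Your perturbation argument --- delete $v_k$ from the MST of $\mathcal{P}_k$, use the degree-at-most-six property, and reconnect the resulting subtrees with edges controlled by the triangle inequality --- establishes precisely the constant-factor relation $|\EMST(\mathcal{A}_k)|\le (1+C)\,|\EMST(\mathcal{P}_k)|$ that the order claim requires, so on this sub-step your proof is more careful than the paper's. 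Finally, the two-parameter subtlety you flag (Steele's theorem is an $r_k\to\infty$ limit for a fixed density, while the density carries the $n$-dependence) is real and is silently glossed over in the paper as well; your resolution --- fix $n$, extract the $\sqrt{r_k}$ scaling, and track all $n$-dependence through the integral --- is the reading under which both proofs are valid.
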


\begin{proof}
	With probability $1$, it holds that
	\begin{center}
		$	|\EMST(\mathcal{A}_k)|=\Theta(L_{\mathcal{P}}(\mathcal{s}, r_k))$ for $r_k=\omega(1)$,
	\end{center}
	where $L_{\mathcal{P}}(\mathcal{s}, r_k)$ is defined in (\ref{eq-L-P-beta-9765}). Combining with the fact that \[|\EMST(\mathcal{P}_{k})|\geq |\EMST(\mathcal{A}_k)|,\] then, we get that
	\begin{equation*}
		|\EMST(\mathcal{P}_{k})|=\Omega(L_{\mathcal{P}}(\mathcal{s}, r_k)).
	\end{equation*}
	Then, our focus turns to the derivation of $ L_{\mathcal{P}}(\mathcal{s}, r_k) $. The  given $X_0$ is a point from $\mathcal{O}$ selected arbitrarily and make $X_0$ as the reference point. The distribution of points in $\mathcal{A}_k$ follows the density function
	\begin{equation*}
		f_{X_0}(X)=\frac{\left(\int_{\mathcal{D}(X_0,|X-X_0|)}\mathbf{d}(Y)dY+1\right)^{-s}}{\int_{\mathcal{O}}\left(\int_{\mathcal{D}(X_0,|X-X_0|)}\mathbf{d}(Y)dY+1\right)^{-s}dZ},
	\end{equation*}
	where 
	$\mathbf{d}(Y)=n\sum\nolimits_{c_j\in \mathcal{C}} \frac{g(|Y-c_j|)}{\int_{\mathcal{O}}g(|Z-c_j|)dZ}$ and $\mathcal{C}$ is the positions set of points from \cite{mobihoc2014}. 
	
	While we assume that $\mathcal{g}=0$, we get $g(\cdot)=1$ and $\mathbf{d}(Y)=1$. Combining the conclusion of Lemma \ref{lem-Pr}, the density function can be calculated as:
	\begin{equation*}\label{eq-g-f-X0-X}
		f_{X_0}(X) = \left\{
		\begin{array}{ll}
			\Theta\left( (|X-X_0|^2+1)^{-\mathcal{s}} \right)  , &  \mathcal{s}>1;  \\
			\Theta\left(\frac{1}{\log n}\cdot (|X-X_0|^2+1)^{-1}  \right), &  \mathcal{s}=1;  \\
			\Theta\left(n^{\mathcal{s}-1}\cdot(|X-X_0|^2+1)^{-\mathcal{s}} \right)  , & 0< \mathcal{s} < 1.
		\end{array}
		\right.
	\end{equation*}
	
	Based on the value of $\mathcal{s}$, we have:

	%
	%
	%
	%
	%
	%
	%
	%
	
	(1) When $ \mathcal{s}>1 $,
	\begin{equation*}
		\int_{\mathcal{O}}\sqrt{f_{X_0}(X)}dX = \Theta\left( \int_{\mathcal{O}}\frac{dX}{\left(|X-X_{0}|^{2}+1\right)^{\frac{\mathcal{s}}{2}}}\right)
	\end{equation*}
	\begin{equation*}
		~~~~~~~~~~~~~~~~~~~~~~=\left\{
		\begin{array}{ll}
			\Theta\left(1\right) , & \mathcal{s}>2;\\
			\Theta\left(\log n\right) , & \mathcal{s}=2;\\
			\Theta\left(n^{1-\frac{\mathcal{s}}{2}}\right) , & 1<\mathcal{s}<2.
		\end{array}
		\right.
	\end{equation*}

	(2) When $ \mathcal{s}=1 $,
	\begin{equation*}
		\begin{aligned}
			\int_{\mathcal{O}}\sqrt{f_{X_0}(X)}dX &= \Theta\left(\frac{1}{\sqrt{\log n}}\cdot \int_{\mathcal{O}}\frac{dX}{\left(|X-X_{0}|^{2}+1\right)^{\frac{1}{2}}}\right)\\
			& = \Theta\left(\sqrt{\frac{n}{\log n}}\right).
		\end{aligned}
	\end{equation*}
	
	(3) When $ 0 \leq\mathcal{s} <1 $,
	\begin{equation*}
		\begin{aligned}
			\int_{\mathcal{O}}\sqrt{f_{X_0}(X)} dX &=  \Theta\left(n^{\frac{\mathcal{s}-1}{2}} \cdot \int_{\mathcal{O}}\frac{dX}{\left(|X-X_{0}|^{2}+1\right)^{\frac{\mathcal{s}}{2}}}\right)\\
			&= \Theta\left(\sqrt{n}\right).
		\end{aligned}
	\end{equation*}
	
	Combining with all cases above, we complete the proof.$\hfill\square$
\end{proof}

\subsection{Main Results on Traffic Load}\label{Z-D-result-interestcast}
The following theorem demonstrates a theoretical bound on the traffic load for network $\mathrm{N}$ as follows.
\begin{thm}\label{Z-D-thm-traffic-load-on-sessions}
	Let $\mathrm{L}_{\mathrm{N}}^{\ast}$ denote the traffic load for data transportation in network $\mathrm{N}$ with $n$ nodes.
	It holds that
	\begin{equation*}
		\mathrm{L}_{\mathrm{N}}^{\ast}=\mathrm{L}_{\mathrm{N}}^{\ast}\left(\lambda,\mathcal{i},\mathcal{s},\mathcal{d},n\right),
	\end{equation*}
	where the value of $\mathrm{L}_{\mathrm{N}}$, presented in Extended Data Table \ref{G-tab-total},
	depends on  the data arrival/generating rate $\lambda$, exponent of node influence distribution $\mathcal{i}$, exponent of relationship separation distribution $\mathcal{s}$ and exponent of data destination distribution $\mathcal{d}$.
	
\end{thm}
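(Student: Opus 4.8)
The plan is to combine the four structural lemmas already established (Lemmas \ref{lem-growthsteel}--\ref{lem-EMST-B-D}) into a single case analysis over the exponent parameters. Starting from Lemma \ref{basic_lemma_emst}, the traffic load is $\mathrm{L}_{\mathrm{N}}^{\ast}=\lambda\cdot\Omega\!\left(\sum_{k=1}^{n}|\EMST(\{u_k\}\cup\mathbb{D}_k)|\right)$, so the whole task reduces to estimating the expected total EMST length $\sum_{k=1}^{n}\mathbb{E}\,|\EMST(\mathcal{P}_k)|$ as a function of $n$, and then multiplying through by $\lambda$. The outer summation over the $n$ source nodes is what will ultimately convert a per-session bound into the $n$, $n\log n$, $n^2$, or $n^3$ scaling that matches Metcalfe's Law and its variants, so the core of the proof is obtaining the correct per-session order of $|\EMST(\mathcal{P}_k)|$ and correctly accounting for how many sessions contribute at each order.

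First I would set up the case split exactly as foreshadowed in the ``Derivation of Traffic Load'' section: condition on the node influence $q_k$ drawn from the Zipf law $\Pr(q_k=q)\propto q^{-\mathcal{i}}$, whose normalized asymptotics are given by Lemma \ref{lem-Pr} with $\alpha=\mathcal{i}$. When $q_k=\Theta(1)$ the spanning tree is over a constant-size point set and contributes only a constant per session. When $q_k=\omega(1)$ I would further condition on the number of destinations $r_k$ drawn from the data destination law $\Pr(r_k=r\mid q_k=q)\propto r^{-\mathcal{d}}$; for $r_k=\Theta(1)$ the session contributes a lower order than the $q_k=\Theta(1)$ regime, whereas for $r_k=\omega(1)$ I would invoke Lemma \ref{lem-EMST-B-D} to get $|\EMST(\mathcal{P}_k)|=\Omega(L_{\mathcal{P}}(\mathcal{s},r_k))$, pulling in the relationship-separation exponent $\mathcal{s}$ through the five-way branch of \eqref{eq-L-P-beta-9765}. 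The key technical input is that the density-function integral $\int_{\mathcal{O}}\sqrt{f_{X_0}(X)}\,dX$ computed in the proof of Lemma \ref{lem-EMST-B-D} already supplies the $\mathcal{s}$-dependence of the per-session length, so here I only need to carry the expectations over $q_k$ and $r_k$ through.

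The second half is the expectation arithmetic. Taking expectations I would write $\mathbb{E}\,|\EMST(\mathcal{P}_k)|=\sum_q \Pr(q_k=q)\,\mathbb{E}[\,|\EMST|\mid q_k=q\,]$, substitute the Zipf tails from Lemma \ref{lem-Pr} (three cases in $\mathcal{i}$), and likewise sum over $r$ with the $\mathcal{d}$-dependent Zipf weights, using the $\sqrt{r_k}$ factor from $L_{\mathcal{P}}(\mathcal{s},r_k)$ so that $\mathbb{E}[\sqrt{r_k}]$ produces another three-way split depending on whether $\mathcal{d}$ exceeds, equals, or lies below the critical value $3/2$. Because the session lengths are independent and identically distributed across $k$, Lemma \ref{lem-Kolmogorov-slln} (Kolmogorov's strong LLN) justifies replacing $\frac1n\sum_{k=1}^n|\EMST(\mathcal{P}_k)|$ by its expectation almost surely, so $\sum_{k=1}^n|\EMST(\mathcal{P}_k)|=\Theta\!\big(n\cdot\mathbb{E}\,|\EMST(\mathcal{P}_k)|\big)$. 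Multiplying by $\lambda$ then yields the full expression $\mathrm{L}_{\mathrm{N}}^{\ast}=\mathrm{L}_{\mathrm{N}}^{\ast}(\lambda,\mathcal{i},\mathcal{s},\mathcal{d},n)$, whose entries populate Extended Data Table \ref{G-tab-total}.

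The main obstacle, I expect, is the bookkeeping at the boundaries where the three independent Zipf thresholds ($\mathcal{i}=1$, $\mathcal{d}$ at its critical value, and $\mathcal{s}\in\{1,2\}$) interact, since each critical exponent introduces a logarithmic correction and these $\log n$ factors can combine or cancel across the nested summations over $q$ and $r$. Verifying that the dominant contribution to $\sum_q\sum_r$ genuinely comes from the regime claimed (rather than from the tails) requires checking convergence/divergence of the relevant partial sums case by case, and ensuring the $\Omega$ from Lemma \ref{lem-EMST-B-D} can be upgraded to a matching $\Theta$ where the table asserts a tight order. I would handle this by treating $\mathbb{E}[\sqrt{r_k}]$ and the $q_k$-expectation as separate standard Zipf-moment computations, tabulating each threshold regime independently, and only then forming the product, so that no single case requires reasoning about all three exponents simultaneously.
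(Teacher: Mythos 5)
Your proposal correctly reproduces the first half of the paper's argument: starting from Lemma~\ref{basic_lemma_emst}, splitting sessions by $q_k=\Theta(1)$ versus $q_k=\omega(1)$, then by $r_k=\Theta(1)$ versus $r_k=\omega(1)$, invoking Lemma~\ref{lem-EMST-B-D} for the $\mathcal{s}$-dependent factor, and folding in the Zipf moments of $q_k$ and $\sqrt{r_k}$ (this is exactly the quantity $G(\mathcal{i},\mathcal{d})$ of Extended Data Table~\ref{tab-G}) before applying the strong LLN within conditioned groups. However, there is a genuine gap: you treat the EMST over $\mathcal{P}_k=\{v_k\}\cup\mathcal{A}_k$ (source plus \emph{anchor points}) as if it were the EMST over $\{u_k\}\cup\mathbb{D}_k$ (source plus \emph{actual destination nodes}). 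The anchor points are virtual positions drawn from the density $f_{v_k}$; the destinations are the real nodes $v_{k_i}$ nearest to those anchors. The paper's proof has a second, separate accounting step for this: it shows $E[|v_{k_i}-p_{k_i}|]=\Theta\left(\int_0^{\sqrt{n}}x e^{-\pi x^2}dx\right)=\Theta(1)$, so by the LLN
\begin{equation*}
\sum_{k=1}^{n}\sum_{i=1}^{r_k}|v_{k_i}-p_{k_i}|=\Theta\left(\sum_{k=1}^{n}r_k\right)=\Theta\left(W(\mathcal{i},\mathcal{d})\right),
\end{equation*}
where $W(\mathcal{i},\mathcal{d})$ is the total destination count of Extended Data Table~\ref{tab-W-gamma-varphi}. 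This term enters the final bound on $\sum_k|\EMST(\{u_k\}\cup\mathbb{D}_k)|$ alongside the anchor-tree contribution, and your plan omits it entirely.

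The omission is not a technicality, because $W(\mathcal{i},\mathcal{d})$ is the \emph{dominant} contribution in precisely the regimes that matter most for the paper's conclusions. Take $\mathcal{s}>2$, $0\leq\mathcal{d}<1$, $0\leq\mathcal{i}<1$: your computation gives per-session anchor-tree length $\Theta(\sqrt{r_k})$ (since $\int_{\mathcal{O}}\sqrt{f_{X_0}}\,dX=\Theta(1)$ for $\mathcal{s}>2$), hence a total of order $n\cdot G(\mathcal{i},\mathcal{d})=n\cdot n^{1/2}=n^{3/2}$, whereas Extended Data Table~\ref{G-tab-total} records $\lambda\cdot\Omega(n^2)$ in that cell --- the double-underlined Metcalfe entry --- and that $n^2$ comes from $W(\mathcal{i},\mathcal{d})=\Theta(n^2)$, not from the anchor trees. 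The same happens throughout the $\mathcal{s}\geq 1$ columns, where the $\mathcal{s}$-dependent integral is too small for the anchor-tree term to reach the tabulated order. So without the destination-displacement term your case analysis would produce a strictly weaker table and would fail to reproduce Metcalfe's Law in the strong-localization regimes. A minor additional note: the table asserts only $\Omega$ lower bounds, so your stated concern about upgrading the $\Omega$ of Lemma~\ref{lem-EMST-B-D} to a matching $\Theta$ is unnecessary for this theorem.
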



\subsection{Proof of Theorem \ref{Z-D-thm-traffic-load-on-sessions}}
\begin{proof}	
	We present at the logical framework for the proof. For conciseness, we refer to several common conclusions from \cite{mobihoc2014, wang2016modeling}. The scenario they studied, i.e., online social networks, is indeed a special case for this work. There are still some commonalities in technical logic.
	
	In the context of the underlying network, the problem of achieving optimal transmission distance involves connecting the nodes in the given plane, which constitutes the set $|{u_k}\cup\mathbb{D}_k|$, where $\mathbb{D}_k$ represents all destinations of $u_k$. 
	We can employ the total length of the Euclidean spanning tree of a session to gauge the magnitude of the optimal transport distance.
	This problem can be simplified as generating the Euclidean Steiner tree for the set ${u_k}\cup\mathbb{D}_k$ \cite{gilbert1968steiner}.
	%
	%
	Then, we can get that
	\begin{equation*}
		\begin{aligned}
			\mathrm{L}_{\mathrm{N}}^{\ast}= \lambda \cdot\Omega\left( \sum_{k=1}^{n}|\EMST(\{u_{k}\}\cup \mathbb{D}_{k})|\right),
		\end{aligned}
	\end{equation*}
	where $\EMST(\cdot)$ denotes the Euclidean minimum spanning tree \cite{penrose1999strong}.
	
	Further, 
	let $\mathcal{A}_k=\{p_{k_1},p_{k_2},\cdots,p_{k_{q_{k}}}\}$ and $\mathcal{F}_k=\{v_{k_1},v_{k_2},\cdots,v_{k_{q_{k}}}\}$ respectively denote the set of  $q_k$ nodes and	the closest nodes to  $p_{k_i}$, for $1 \leq i \leq q_k$.
	Additionally, we denote a set
	$\mathcal{P}_k= \{v_k\} \cup \mathcal{A}_k$, where $v_k$ is the source from a given session $\mathrm{T}_k$ and call the anchor point of $v_{k_i}$ as $p_{k_i}$.
	Then, for a data transportation session $\mathrm{T}_k$, we can get 
	
	\vspace{-0.11in}
	
	{\small
		\begin{equation*}
			|\EMST(\mathcal{A}_k)|=\Theta(L_{\mathcal{P}}(\mathcal{s}, r_k)), 
			|\EMST(\mathcal{P}_{k})|=\Omega(L_{\mathcal{P}}(\mathcal{s}, r_k)),
		\end{equation*}
	}
	
	\vspace{-0.02in}
	
	\noindent where $L_{\mathcal{P}}(\mathcal{s}, r_k)=\sqrt{r_k}\cdot \int_{\mathcal{O}}\sqrt{f_{X_0}(X)}dX$. 
	In Lemma \ref{lem-EMST-B-D}, the result of $L_{\mathcal{P}}(\mathcal{s}, r_k)$ is in (\ref{eq-L-P-beta-9765}).
	\begin{equation}
		L_{\mathcal{P}}(\mathcal{s}, r_k) = \left\{
		\begin{array}{ll}
			\Theta\left( \sqrt{r_k} \right)  , &  \mathcal{s}>2;  \\
			\Theta\left(\sqrt{r_k} \cdot \log n  \right), &  \mathcal{s}=2;  \\
			\Theta\left(\sqrt{r_k} \cdot n^{1-\frac{\mathcal{s}}{2}} \right)  , & 1< \mathcal{s} < 2;  \\
			\Theta\left(\sqrt{r_k} \cdot \sqrt{\frac{n}{\log n}} \right) , &  \mathcal{s}=1;  \\
			\Theta\left(\sqrt{r_k} \cdot \sqrt{n} \right)  , &  0\leq\mathcal{s} <1.
		\end{array}
		\right.
	\end{equation}
	
	\par First, we consider the \textit{node influence distribution} and we define two sets for all data transportation sessions, where $\mathcal{K}^{1}$ represents the nodes with a constant level of influence $\Theta(1)$, and $\mathcal{K}^{\infty}$ denotes the nodes with a non-constant level of influence. Together, they fully characterize $\sum\nolimits_{k=1}^{n}|\EMST(\mathcal{P}_k)|$. In this proof, ${\Psi}^{1}$ is defined to represent the sum of the lengths of all the Euclidean minimum spanning trees under the condition $k \in \mathcal{K}^{1}$, where ${\Psi}^{1}=\sum_{k \in \mathcal{K}^{1}}|\EMST(\mathcal{P}_k)|$, and ${\Psi}^{\infty}$ is defined to represent the one under the condition $k \in \mathcal{K}^{\infty}$, where ${\Psi}^{\infty}=\sum_{k \in \mathcal{K}^{\infty}}|\EMST(\mathcal{P}_k)|$.

	We first address ${\Psi}^{1}$. Since for $q_k=\Theta(1)$, it holds that 	$|\EMST(\mathcal{P}_{k})|=\Theta(|X-v_k|)$. We define a sequence of random variables $\varepsilon_k:=|X-v_k|/\sqrt{n}$, and $\mathbb{E}[\varepsilon_k]=\mathbb{E}[|X-v_k|]/\sqrt{n}$. The value of $\mathbb{E}[|X-v_k|]$ is given in Lemma 5 of \cite{mobihoc2014} and ${\Psi}^{1}=\Theta(\sqrt{n}\sum_{k\in \mathcal{K}^{1}}\varepsilon_k )$. Through Kolmogorov's Strong laws of larger numbers (LLM) \cite{williams1991probability}, with probability 1,
	\begin{equation*}
		{\Psi}^{1}=\Theta\left(|\mathcal{K}^{1}| \cdot E[|X-v_{k}|] \right).
	\end{equation*}

	Next, the \textit{data destination distribution} is utilized to consider ${\Psi}^{\infty} $. All $ |\EMST(\mathcal{P}_k)| $ are independent, where $k \in \mathcal{K}^{\infty}$, due to the introduction of anchor points. Similarly to the node influence distribution, we apply the same approach to distinguish the number of nodes selected by this distribution. Specifically, we denote ${\Psi}_{1}^{\infty}$ and ${\Psi}_{\infty}^{\infty}$ as the sum of lengths of all the Euclidean minimum spanning trees under $r_k=\Theta(1)$ and $r_k=\omega(1)$, respectively.

	For $r_k=\Theta\left(1\right)$, the value of $ {\Psi}_{1}^{\infty} $ is relatively infinitesimal compared with ${\Psi}^{1} $.
	
	Then we consider $ \mathcal{K}^{\infty}_{\infty} $. We define $T_q=n\cdot \Pr(q_k=q)=n\cdot \left(\sum_{j=1}^{n-1} j^{-\mathcal{i}}\right)^{-1}\cdot q^{-\mathcal{i}}$, as the number of nodes with $q$ destinations.
	Through LLN \cite{williams1991probability}, with probability $1$, it holds that
	\begin{equation*}\label{Transport-for-d-wuqiong}
		{\Psi}_{\infty}^{\infty} \geq \sum_{q=2}^{n-1} \sum_{r=1}^{q} T_{q} \cdot \mathrm{Pr}(r_{k}=r|q_{k}=q) \cdot L_{\mathcal{P}}\left(\mathcal{s},r\right).
	\end{equation*}
	Actually, what we should calculate is $\sum_{q=2}^{n-1}\sum_{r=1}^{q}\Pr(q_k=q)\cdot \Pr(r_k=r|q_k=q)$, denoted as $G(\mathcal{i},\mathcal{d})$, which is illuastrated in Extended Data Table \ref{tab-G}.
	
	Combining with the above conclusions, the lower bounds are get on $\sum\nolimits_{k=1}^{n}|\mathrm{EMST}(\mathcal{P}_k)|$.

	Then we calculate $\sum\nolimits_{k=1}^n| \mathrm{EMST}(\mathbb{D}_k) |$. Since
	\begin{equation*}
		\sum_{k=1}^{n} r_{k} = \Theta\left(\sum_{q=1}^{n-1} \sum_{r=1}^{q} T_{q} \cdot \mathrm{Pr}(r_{k}=r|q_{k}=q) \cdot r \right),
	\end{equation*}
	where $T_{q}  =  n\cdot\mathrm{Pr}(q_{k}=q) $. We get the result of $ \sum\nolimits_{k=1}^{n} r_{k}$ as $W\left(\mathcal{i}, \mathcal{d}\right)$, as illustrated in Extended Data Table \ref{tab-W-gamma-varphi}.
	
	Additionally, for all $u_k\in \mathcal{U}$, 
	\begin{equation*}
		E[| v_{k_{i}} - p_{k_{i}} |] =\Theta\left(\int_0^{\sqrt{n}}x\cdot e^{-\pi x^2}dx\right).
	\end{equation*}
	
	That is,
	\begin{equation*}
		E[| v_{k_{i}} - p_{k_{i}} |] = \Theta \left(1\right).
	\end{equation*}
	and LLN, with probability $1$, it holds that
	\begin{equation*}\label{Length-Anchor-Friends}
		\sum_{k=1}^{n} \sum_{i=1}^{r_{k}} |v_{k_{i}}-p_{k_{i}}| = \Theta\left(\sum_{k=1}^n r_k\right)= \Theta\left(W\left(\mathcal{i}, \mathcal{d}\right)\right).
	\end{equation*}
	Combining the result of $\sum\nolimits_{k=1}^{n}|\mathrm{EMST}(\mathcal{P}_k)|$
	and the data arrival rate $\lambda$, we complete the proof.$\hfill\square$
	
\end{proof}

\subsection{How to Understand The Results}
Visual results are conditioned on multiple distribution indices, represented in Extended Data Figure \ref{pic-understand}. We have come to observe that the theoretical depiction of traffic load manifests breakpoints regardless of the conditions, signifying potential "phase transitions" in real-world scenarios due to varying conditions. These breakpoints play a pivotal role in delineating distinct shifts or transformations within the network dynamics. This observation parallels the existence of multiple variants of Metcalfe's Law, suggesting the presence of diverse breakpoints and transitions, thereby underlining the crucial role these breakpoints hold.

\subsection{Mechanisms to Reproduce Scaling Property}\label{Variants}

In this section, we offer an in-depth explanation of variants of Metcalfe's Law. Based on theoretical findings, we establish connections between these different families of laws, identifying some theoretical exemplary cases and real-world scenarios in which they are applicable. 

Here, we provide a concise introduction to the recurrence for the Metcalfe's Law family. Considering the influence of increasing the number of network nodes on the data arrival rate $\lambda$, we assume that $\lambda$ is a non-decreasing function of $n$. In the analysis, we explore different scaling behaviors for $\lambda$, specifically $\Theta(1)$, $\Theta(n^{1/2})$, and $\Theta(n)$. Based on the theoretical findings, we match different laws with their respective exemplary possible cases. We use these cases to explain the real-world implications of these laws and provide insights into Metcalfe's Law and its variants. Actually, from Extended Data Table \ref{tab-W-gamma-varphi}, we can also derive other laws corresponding to different cases. To demonstrate the importance of Metcalfe's Law, we consider the conditions of several other laws corresponding to different parameter settings while keeping $\lambda=\Theta(1)$ and $\mathcal{g}=0$, as shown in Figure 1(b).

\subsubsection{Sarnoff's Law: $n$}

\textbf{Theoretical Cases of Sarnoff's Law.}
%
%
These scenarios for Sarnoff's Law are characterized by strong regional correlations and low-degree distributions.
In networks characterized by limited openness, interactions between nodes are more constrained, especially when node activity levels are relatively low. Consequently, the amount of data transmitted between nodes within a given time period is significantly reduced. It is reasonable to use a constant level of $\lambda$ in such scenarios.

In such networks, the information flows between nodes are hindered due to the restricted nature of node interactions, which may arise from factors such as access permissions of the device. As a result, the level of data transmission within the network remains minimal, with limited exchange of information among nodes.

\noindent \textbf{Practical Analysis on Sarnoff's Law.}
The conditions for Sarnoff's Law primarily apply to networks with strong localization and a large but limited number of active users, leading to minimal node overlap. Originally formulated for broadcast media platforms, such as TV and similar channels, these networks have limited user interaction, focusing on one-way information transmission from broadcasting nodes to viewers. Hence, the network traffic and value depend on the number of nodes linearly.

In broadcast media networks, the value of the network is closely tied to the quantity of nodes, rather than the level of user interaction or engagement.  The main focus is on the dissemination of content from a central source to a large audience, with little to no direct interaction between viewers. The value of these networks are primarily measured by the reach and size of their audience.

\subsubsection{Odlyzko's Law: $n\log n$}
\textbf{Theoretical Cases of Odlyzko's Law.}
These conditions that identified the Odlyzko's Law indicate a high level of relationship under a  limited region, a large number of potential nodes, and relatively fewer destinations for nodes to select.

This set of conditions is associated with networks that exhibit strong localization, where the majority of nodes have limited connections and there are fewer active nodes. In such networks, the value of the network primarily stems from the potential user base rather than active user engagement. Odlyzko's Law challenges Metcalfe's Law by emphasizing that the value growth of communication networks is not solely dependent on the number of active connections but also considers the potential reach of the network.

Note that the applicability of Odlyzko's Law is contingent upon the specific conditions outlined above. These conditions highlight the characteristics of networks where the value growth is primarily driven by the potential user base rather than the level of active engagement.

\noindent\textbf{Practical Analysis on Odlyzko's Law.}
One typical example is carrier networks \cite{briscoe2006metcalfe}. In the carrier networks, it is a common practice for large ISPs (Internet Service Providers) to decline peering arrangements with smaller operators. These networks demonstrate a pronounced localization, a substantial user base potential, and relatively low user activity levels. In this context, the primary emphasis lies on interconnection between operators rather than extensive user-to-user interconnection. Consequently, the overall number of actively participating nodes within the network remains comparatively limited.

Different from the conditions outlined in Sarnoff's Law, networks guided by Odlyzko's Law exhibit a slightly elevated level of interactivity.
In general, user-to-user interconnection within the network is limited or almost non-existent. As a result, the network's predominant architecture centers around facilitating interconnection between operators rather than facilitating direct interactions between users.
In such a context, the number of actively engaged nodes remains constrained.

When taking into account these attributes, it is reasonable to introduce a factor of $\log n$, to capture the slightly heightened interactivity relative to Sarnoff's Law. This factor acknowledges the logarithmic growth in network effects stemming from the constrained interconnection among users within these networks \cite{briscoe2006metcalfe}.


%

\subsubsection{Cube Law: $n^3$}
\textbf{Theoretical Cases of Cube Law.}
As the number of users increases, the amount of incoming information within a unit of time may approach the total number of users, as demonstrated by the data from Zhang et al. \cite{zhang2023facebook} in 2020.

This trend emphasizes the importance of considering the growing user count when analyzing the information flow in networks. With an expanding user base, the network becomes more vibrant, fostering increased interactions and information sharing among users. The continuous growth in user numbers amplifies the rate of information arrivals, underscoring the notion that user interactions become increasingly intertwined as the user base expands.

\noindent\textbf{Practical Analysis on Cube Law.}
There are several factors that contribute to the observed phenomenon. Firstly, a larger user base in the network leads to a higher number of active participants. As more users engage in generating and sharing information, the overall rate of information arrivals naturally increases. This influx of data contributes to a dynamic and vibrant network ecosystem.


Additionally, with the increase in the number of nodes in the network, we observe a shift in the relationship between network value and network traffic. The traditional linear correlation is surpassed as the network scales. Real-world scenarios, particularly evident in online social networks, exemplify this trend. Each new user joining the network brings a wealth of additional content, connections, and interactions, all contributing to an enhanced user experience. This positive user experience attracts even more users to join the network, resulting in a self-reinforcing cycle of growth.

Furthermore, economies of scale play a significant role in this phenomenon. As the number of users increases, the network's infrastructure and resources become more efficient and cost-effective. This, in turn, enables the network to handle higher traffic loads and accommodates the growing user base without compromising on performance.

The cube law suggests a highly interconnected network, where users actively exchange information and contribute to the overall network value. This further reinforces the notion that a larger user base brings about greater economies of scale, driving network growth and efficiency.

\subsection{Empirical Evaluations}\label{Evaluation}

In this section, we provide an evaluation of the parameters in the theoretical model by utilizing several relevant datasets.
While prior work has extensively explored data validation of Metcalfe's Law and its variants, these validations have been conducted separately in various network contexts. 



\subsubsection{Network Geography Distribution}

In this section, we provide the evaluations of the network geography distribution through the Gowalla dataset. The Gowalla dataset is a comprehensive collection of location-based information obtained from the Gowalla social networking platform. It predominantly emphasizes geospatial data associated with check-ins and user activities across diverse physical locations.  Each check-in record generally encompasses details such as geographical coordinates (latitude and longitude) of the location, a timestamp, and occasionally supplementary metadata like the place's name or venue category.
This dataset presents a diverse and extensive assortment of geographic information, facilitating analyses concerning user mobility patterns, location-based recommendations, and insights into human behavior linked with spatial interactions.

In Extended Data Figure \ref{fig:final_map}, we present the overall geographical distribution of Gowalla users in the United States, utilizing a Mercator projection coordinate system. Building upon this, we select two specific locations to examine the distribution of geographical positions, as illustrated in Extended Data Figure \ref{fig_gowalla}, providing further clarification for the configurations.


In the evaluations, we specifically extract data from 29,105 users in Bloomington, Illinois (within the coordinates 85\degree W - 92\degree W and 37\degree N - 45\degree N, Extended Data Figure \ref{fig_gowalla}(a)), and 84,222 users in San Francisco, California (within the coordinates 121\degree W - 122\degree W and 37\degree N - 38\degree N Extended Data Figure \ref{fig_gowalla}(b)). Subsequent to a detailed analysis of the dataset, a noticeable trend emerges: the geographical distribution within real-world networks tends to favor uniformity. This observation robustly supports the credibility of the underlying assumption.

\subsubsection{Traffic Load}
The Telecommunication datasets \cite{barlacchi2015multi} provide data about the telecommunication activity in the city of Milan and in the Province of Trentino. Specifically, one for telecommunication activities and two for telecommunication interactions. The dataset serves as measure of the level of interaction between the users and the mobile phone network.

From the conclusions drawn in Figure 2, Extended Data Figure \ref{pic-fit2} and, Extended Data Table \ref{tab-law-fit}, we can analyze distinct outcomes. Notably, when fitting a cubic function, the coefficient of the cubic term is significantly smaller than those of the quadratic and linear terms. However, when employing alternative functions for fitting, their performance does not match that of the cubic function. Hence, even though the coefficient of the cubic term is notably small, its impact on the function remains substantial. Consequently, when real-world scenarios exhibit changes in concavity, the cubic function demonstrates sufficient complexity to capture these variations effectively.

\subsection{Discussion}
To further indicate the connection between the work and the Metcalfe's Law family, this section delves into the universality of the proposed network framework and explores the relationship between the traffic load and value of networks.


$\bullet$ In the model, each individual node within the network possesses the capability to not only act as a data receiver but also as a data source. This assumption is supported by empirical evidence in real-life scenarios. Across the spectrum of communication technologies, ranging from the deployment of 4G in mobile internet to the revolutionary implementation of 5G in the Internet of Things (IoT), this dual functionality of network nodes has consistently manifested itself. The introduction of this generic model provides numerous possibilities for information dissemination, collaborative data exchange, and peer-to-peer communication.

$\bullet$ In this study, we quantify network value through network traffic load, as both are generated by the demands of network nodes.
In the everyday understanding, a prevailing consensus suggests \textit{traffic is value}. 
We adopt an assumption of a linear relationship between network value and traffic load, a crucial basis for drawing conclusions. However, in the cube law, it is observed that the scenario is more inclined towards large-scale networks. In such cases, the value generated by network traffic may be significantly greater, potentially breaking through the linear relationship between network traffic and value. We recognize the necessity to embark on future research endeavors aimed at establishing a rigorous and well-defined measurement standard for network value.

\section{Data Availability}
The Milan dataset that support the findings of this study are available from the following website: https://dataverse.harvard.edu/dataset.xhtml?persistentId=doi:10.7910/DVN/EGZHFV and the Gowalla dataset are available from the following website: https://snap.stanford.edu/data/loc-Gowalla.html.

\bibliographystyle{plain}
\bibliography{MetcalfesLaw}

\begin{thebibliography}{10}

\bibitem{alabi2017digital}
Ken Alabi.
\newblock Digital blockchain networks appear to be following metcalfe’s law.
\newblock {\em Electronic Commerce Research and Applications}, 24:23--29, 2017.

\bibitem{assif2023fair}
Mishal Assif, William Kennedy, and Iraj Saniee.
\newblock Fair allocation in crowd-sourced systems.
\newblock {\em Games}, 14(4):57, 2023.

\bibitem{barlacchi2015multi}
Gianni Barlacchi, Marco De~Nadai, Roberto Larcher, Antonio Casella, Cristiana
  Chitic, Giovanni Torrisi, Fabrizio Antonelli, Alessandro Vespignani, Alex
  Pentland, and Bruno Lepri.
\newblock A multi-source dataset of urban life in the city of {Milan} and the
  province of {Trentino}.
\newblock {\em Scientific Data}, 2(1):1--15, 2015.

\bibitem{benevenuto2009characterizing}
Fabr{\'\i}cio Benevenuto, Tiago Rodrigues, Meeyoung Cha, and Virg{\'\i}lio
  Almeida.
\newblock Characterizing user behavior in online social networks.
\newblock In {\em Proc. ACM IMC 2009}, 2009.

\bibitem{briscoe2006metcalfe}
Bob Briscoe, Andrew Odlyzko, and Benjamin Tilly.
\newblock Metcalfe's law is wrong-communications networks increase in value as
  they add members-but by how much?
\newblock {\em IEEE Spectrum}, 43(7):34--39, 2006.

\bibitem{15Franceschetti2007}
M.~Franceschetti, O.~Dousse, D.~Tse, and P.~Thiran.
\newblock Closing the gap in the capacity of wireless networks via percolation
  theory.
\newblock {\em IEEE Transactions on Information Theory}, 53(3):1009--1018,
  2007.

\bibitem{gilbert1968steiner}
Edgar~N Gilbert and Henry~O Pollak.
\newblock Steiner minimal trees.
\newblock {\em SIAM Journal on Applied Mathematics}, 16(1):1--29, 1968.

\bibitem{gilder1993metcalf}
George Gilder.
\newblock Metcalfe's law and legacy.
\newblock {\em FORBES ASAP}, 152:158--159, 1993.

\bibitem{grimmett1999percolation}
Geoffrey Grimmett and Geoffrey Grimmett.
\newblock {\em What is percolation?}
\newblock Springer, 1999.

\bibitem{hendler2008metcalfe}
James Hendler and Jennifer Golbeck.
\newblock Metcalfe's law, web 2.0, and the semantic web.
\newblock {\em Journal of Web Semantics}, 6(1):14--20, 2008.

\bibitem{jun2021how}
Gwangjae Jung and Byungtae Lee.
\newblock How did {Facebook} outspace {Myspace} with open innovation? an
  analysis of network competition with changes of network topology.
\newblock In {\em Proc. PACIS 2021}, 2021.

\bibitem{kaprowski1996fcc}
G~Kaprowski.
\newblock {FCC} chair wants universal net access-and he’s serious.
\newblock {\em Wired News, Nov}, 25, 1996.

\bibitem{karagiannis2004nonstationary}
Thomas Karagiannis, Mart Molle, Michalis Faloutsos, and Andre Broido.
\newblock A nonstationary poisson view of {Internet} traffic.
\newblock In {\em IEEE INFOCOM 2004}, volume~3, pages 1558--1569, 2004.

\bibitem{kleinberg2000navigation}
Jon~M Kleinberg.
\newblock Navigation in a small world.
\newblock {\em Nature}, 406(6798):845--845, 2000.

\bibitem{liben2005geographic}
David Liben-Nowell, Jasmine Novak, Ravi Kumar, Prabhakar Raghavan, and Andrew
  Tomkins.
\newblock Geographic routing in social networks.
\newblock {\em Proceedings of the National Academy of Sciences},
  102(33):11623--11628, 2005.

\bibitem{ma2023does}
Ruiyang Ma, Yi~Lin, and Boqiang Lin.
\newblock Does digitalization support green transition in {Chinese} cities?
  perspective from {Metcalfe's Law}.
\newblock {\em Journal of Cleaner Production}, 425:138769, 2023.

\bibitem{madureira2013empirical}
Ant{\'o}nio Madureira, Frank den Hartog, Harry Bouwman, and Nico Baken.
\newblock Empirical validation of metcalfe’s law: How {Internet} usage
  patterns have changed over time.
\newblock {\em Information {Economics} and {Policy}}, 25(4):246--256, 2013.

\bibitem{manning1999foundations}
C.D. Manning and H.~Sch{\"u}tze.
\newblock {\em Foundations of Statistical Natural Language Processing}.
\newblock MIT press, 1999.

\bibitem{metcalfe2013metcalfe}
Bob Metcalfe.
\newblock Metcalfe's law after 40 years of {Ethernet}.
\newblock {\em Computer}, 46(12):26--31, 2013.

\bibitem{moro2023smart}
Roberto Moro-Visconti.
\newblock Smart healthcare digital transformation during the {Covid-19}
  pandemic.
\newblock In {\em Digital Transformation in Healthcare in Post-Covid-19 Times},
  pages 111--132. Elsevier, 2023.

\bibitem{nowak2023social}
Samuel Nowak.
\newblock The social lives of network effects: Speculation and risk in
  {Jakarta's} platform economy.
\newblock {\em Environment and Planning A: Economy and Space}, 55(2):471--489,
  2023.

\bibitem{penrose1999strong}
M.D. Penrose.
\newblock A strong law for the longest edge of the minimal spanning tree.
\newblock {\em The Annals of Probability}, 27(1):246--260, 1999.

\bibitem{perera2010twitter}
Rohan~DW Perera, Sruthy Anand, KP~Subbalakshmi, and R~Chandramouli.
\newblock Twitter analytics: Architecture, tools and analysis.
\newblock In {\em Proc. MILCOM 2010}, pages 2186--2191, 2010.

\bibitem{peterson2018metcalfe}
Timothy Peterson.
\newblock Metcalfe's law as a model for bitcoin's value.
\newblock {\em Alternative Investment Analyst Review Q}, 2, 2018.

\bibitem{rajgopal2001web}
Shivaram Rajgopal, Mohan Venkatachalam, and Suresh Kotha.
\newblock Why is web traffic value-relevant for internet firms.
\newblock {\em Retrieved Oct}, 1:2005, 2001.

\bibitem{reed1999weapon}
David~P Reed.
\newblock Weapon of math destruction: A simple formula explains why the
  {Internet} is wreaking havoc on business models.
\newblock {\em Context Magazine}, 1999.

\bibitem{ross2003commandments}
Philip~E Ross.
\newblock Commandments: The rules engineers live by weren’t always set in
  stone.
\newblock {\em IEEE Spectrum}, 40(12):30--35, 2003.

\bibitem{sala2010brief}
Alessandra Sala, Haitao Zheng, Ben~Y Zhao, Sabrina Gaito, and Gian~Paolo Rossi.
\newblock Brief announcement: Revisiting the power-law degree distribution for
  social graph analysis.
\newblock In {\em Proc. ACM PODC 2010}, 2010.

\bibitem{scala2023explosive}
Antonio Scala and Marco Delmastro.
\newblock The explosive value of the networks.
\newblock {\em Scientific Reports}, 13(1):1037, 2023.

\bibitem{steele1988growth}
J.M. Steele.
\newblock Growth rates of {Euclidean} minimal spanning trees with power
  weighted edges.
\newblock {\em The Annals of Probability}, 16(4):1767--1787, 1988.

\bibitem{swann2002functional}
GM~Peter Swann.
\newblock The functional form of network effects.
\newblock {\em Information Economics and Policy}, 14(3):417--429, 2002.

\bibitem{tang5study}
Lei Tang and Yongqing Zhang.
\newblock A study on value assessment of e-commerce enterprises based on the
  model of real options.
\newblock {\em Academic Journal of Business \& Management}, 5(10):118--129,
  2023.

\bibitem{terdik2008levy}
Gy{\"o}rgy Terdik and Tibor Gyires.
\newblock L{\'e}vy flights and fractal modeling of internet traffic.
\newblock {\em IEEE/ACM Transactions on Networking}, 17(1):120--129, 2008.

\bibitem{van2014metcalfe}
Leo Van~Hove.
\newblock Metcalfe’s law: Not so wrong after all.
\newblock {\em NETNOMICS: Economic Research and Electronic Networking},
  15:1--8, 2014.

\bibitem{mobihoc2014}
Cheng Wang, Shaojie Tang, Lei Yang, Yi~Guo, Fan Li, and Changjun Jiang.
\newblock Modeling data dissemination in online social networks: A geographical
  perspective on bounding network traffic load.
\newblock In {\em Proc. ACM MobiHoc 2014}, 2014.

\bibitem{wang2016modeling}
Cheng Wang, Zhenzhen Zhang, Jieren Zhou, Yuan He, Jipeng Cui, and Changjun
  Jiang.
\newblock Modeling interest-driven data dissemination in online social
  networks.
\newblock In {\em Proc. MSN 2016}, pages 290--295, 2016.

\bibitem{wang2023revisiting}
Xinbing Wang, Luoyi Fu, Huquan Kang, Zhouyang Jin, Lei Zhou, and Chenghu Zhou.
\newblock Revisiting network value: Sublinear knowledge law.
\newblock {\em arXiv preprint arXiv:2304.14084}, 2023.

\bibitem{wheatley2019bitcoin}
Spencer Wheatley, Didier Sornette, Tobias Huber, Max Reppen, and Robert~N
  Gantner.
\newblock Are bitcoin bubbles predictable? combining a generalized metcalfe’s
  law and the log-periodic power law singularity model.
\newblock {\em Royal Society Open Science}, 6(6):180538, 2019.

\bibitem{williams1991probability}
D.~Williams.
\newblock {\em Probability with martingales}.
\newblock Cambridge university press, 1991.

\bibitem{zhang2015tencent}
Xingzhou Zhang and Jingjie Liuand~Zhiwei Xu.
\newblock Tencent and {Facebook} data validate metcalfe’s law.
\newblock {\em Journal of Computer Science and Technology}, 30:246--251, 2015.

\bibitem{zhang2023facebook}
Xingzhou Zhang and Zhiwei Xu.
\newblock Facebook and {Tencen}t data fit a cube law better than metcalfe’s
  law.
\newblock {\em Journal of Computer Science and Technology}, 38:219--227, 2023.

\end{thebibliography}
\section*{Author Contributions}
C.W. performed research, designed the methodology and wrote the manuscript with the supervision of C.J.. Y.W. analysed the data and provided the visualization and data presentation.

\section*{Competing Interest}
The authors declare no competing interests.

\section*{Additional Information}
\textbf{Correspondence} and requests for materials should be addressed to Cheng Wang.

\newpage

\begin{table}[t] \renewcommand{\arraystretch}{0.7}
		\captionsetup{type=table, name=Extended Data Table}
	\caption{Some exemplary possible cases of the representative scaling laws under the presumption that there is a linear relationship between the value and traffic load of networks.}
	\label{tab-law-condition} \centering
	
	\resizebox{0.7\linewidth}{!}{	   
		\begin{tabular}{|l|l|}
			\toprule Representative Scaling Laws& Some Exemplary Possible Cases
			\\
			\midrule
			\multirow{6}{*}{Metcalfe's Law: $n^2$} 
			& $\lambda=\Theta(1),\mathcal{s}\geq 0,0\leq\mathcal{d}<1,0\leq \mathcal{i}<1;$
			\\
			& $\lambda=\Theta(n^{1/2}),0\leq\mathcal{s}<1,\mathcal{d}>3/2,\mathcal{i}\geq 0;$
			\\
			& $\lambda=\Theta(n^{1/2}),0\leq \mathcal{s}<1,\mathcal{d}=3/2,\mathcal{i}>1;$
			\\
			& $\lambda=\Theta(n^{1/2}),0\leq \mathcal{s}<1,1<\mathcal{d}<3/2,\mathcal{i}>5/2-\mathcal{d};$
			\\
			& $\lambda=\Theta(n^{1/2}),0\leq \mathcal{s}<1,\mathcal{d}=1,\mathcal{i}\geq 2;$
			\\
			& $\lambda=\Theta(n^{1/2}),0\leq \mathcal{s}<1,0\leq\mathcal{d}<1,\mathcal{i}>3/2.$
			\\
			
			\midrule
			
			\multirow{5}{*}{Sarnoff's Law: $n$} 
			& $\lambda=\Theta(1),\mathcal{s}>2,\mathcal{d}>2,\mathcal{i}\geq0;$
			\\
			& $\lambda=\Theta(1),\mathcal{s}>2,\mathcal{d}=2,\mathcal{i}>1;$
			\\
			& $\lambda=\Theta(1),\mathcal{s}>2,1<\mathcal{d}<2,\mathcal{i}>3-\mathcal{d};$
			\\
			& $\lambda=\Theta(1),\mathcal{s}>2,\mathcal{d}=1,\mathcal{i}\geq 2;$
			\\
			& $\lambda=\Theta(1),\mathcal{s}>2,0\leq\mathcal{d}<1,\mathcal{i}>2.$
			\\
			\midrule
			
			\multirow{5}{*}{Odlyzko's Law: $n\log n$} 
			& $\lambda=\Theta(1),\mathcal{s}=2,\mathcal{d}\geq2,\mathcal{i}\geq0;$
			\\
			& $\lambda=\Theta(1),\mathcal{s}>2,\mathcal{d}=2,0\leq\mathcal{i}<1;$
			\\
			& $\lambda=\Theta(1),\mathcal{s}=2,1\leq\mathcal{d}<2,\mathcal{i}\geq3-\mathcal{d};$
			\\
			& $\lambda=\Theta(1),\mathcal{s}>2,0\leq\mathcal{d}<1,\mathcal{i}=2;$
			\\
			& $\lambda=\Theta(1),\mathcal{s}=2,0\leq\mathcal{d}<1,\mathcal{i}\geq2.$
			\\
			\midrule
			
			\multirow{1}{*}{Cube Law: $n^3$} 
			& $\lambda=\Theta(n),\mathcal{s}\geq 0,0\leq\mathcal{d}<1,0\leq \mathcal{i}<1.$
			\\
			\bottomrule
		\end{tabular}
		
	}
	\vspace{-0.6cm}
\end{table}

\begin{table*}[!t]
	\vspace{-0.3in}
	\renewcommand{\arraystretch}{1.45}
	\centering

	\resizebox{0.9\linewidth}{!}
	{
		\begin{tabular}{ !{\vrule width1.2pt} l!{\vrule width1.2pt}c|c|c|c|c !{\vrule width1.2pt} }

			\toprule
			$\mathcal{d}$ $\backslash$ $\mathcal{s}$ & $\mathcal{s}>2$ & $\mathcal{s}=2$ & $ 1<\mathcal{s}<2 $ & $\mathcal{s}=1$ & $0\leq\mathcal{s}<1$ \\
			\toprule

			$\mathcal{d}>2$ &
%
%
			$\uline{\lambda \cdot \Omega(n)} , \mathcal{i}\geq0. $ &
%
			
			$ \dashuline{\lambda \cdot \Omega(n \log n)} , \mathcal{i}\geq0. $ &
			$ \lambda \cdot \Omega(n^{2-\frac{\mathcal{s}}{2}}) , \mathcal{i}\geq0. $ &
			$ \lambda \cdot \Omega(n^{\frac{3}{2}} / \sqrt{\log n}), \mathcal{i}\geq0. $ &
			
			$ \lambda \cdot \Omega(n^{\frac{3}{2}}), \mathcal{i}\geq0. $ \\
			\midrule
			
			$\mathcal{d}=2$ & $\left\{
			\begin{array}{ll}
%
%
				\uline{\lambda \cdot \Omega(n)}  , \\
				\qquad   \mathcal{i} >1;\\
%
%
				\dashuline{\lambda \cdot \Omega(n \log n)} ,  \\
				\qquad 0\leq\mathcal{i}\leq1.\\
			\end{array}
			\right. $ &
			
%
%
			
			$ \dashuline{\lambda \cdot \Omega(n \log n)}$  ,  $ \mathcal{i}\geq0. $  &
			$ \lambda \cdot \Omega(n^{2-\frac{\mathcal{s}}{2}})  ,  \mathcal{i}\geq0. $  &
			$ \lambda \cdot \Omega(n^{\frac{3}{2}} / \sqrt{\log n}) $ , $ \mathcal{i}\geq0. $ &
			
			$ \lambda \cdot \Omega(n^{\frac{3}{2}}) ,  \mathcal{i}\geq0. $  \\
			\midrule
			
			$\frac{3}{2}<\mathcal{d}<2$ & $\left\{
			\begin{array}{ll}
%
%
				\uline{\lambda \cdot \Omega(n)} , \\
				\qquad  \mathcal{i}>3-\mathcal{d};  \\
%
%
				\dashuline{\lambda \cdot \Omega(n \log n)} , \\
				\qquad  \mathcal{i}=3-\mathcal{d}; \\
				\lambda \cdot \Omega(n^{4-\mathcal{i}-\mathcal{d}})  , \\
				\qquad  1<\mathcal{i}<3-\mathcal{d}; \\
				\lambda \cdot \Omega( n^{3-\mathcal{d}} / \log n) , \\
				\qquad  \mathcal{i}=1; \\
				\lambda \cdot \Omega(n^{3-\mathcal{d}})  , \\
				\qquad  0\leq\mathcal{i}<1.
			\end{array}
			\right. $ &
			$\left\{
			\begin{array}{ll}
%
%
				\dashuline{\lambda \cdot \Omega(n  \log n)} , \\
				\qquad  \mathcal{i}\geq3-\mathcal{d}; \\
				\lambda \cdot \Omega(n^{4-\mathcal{i}-\mathcal{d}})  , \\
				\qquad  1<\mathcal{i}<3-\mathcal{d}; \\
				\lambda \cdot \Omega(n^{3-\mathcal{d}} / \log n) , \\
				\qquad  \mathcal{i}=1; \\
				\lambda \cdot \Omega(n^{3-\mathcal{d}})  , \\
				\qquad  0\leq\mathcal{i}<1.
			\end{array}
			\right. $ &
			$\left\{
			\begin{array}{ll}
				\lambda \cdot \Omega(n^{2-\frac{\mathcal{s}}{2}}) , \\
				\qquad  \mathcal{i}\geq3-\mathcal{d}; \\
				\lambda \cdot \Omega(n^{4-\mathcal{i}-\mathcal{d}})  , \\
				\qquad  1<\mathcal{i}<3-\mathcal{d}; \\
				\lambda \cdot \Omega(n^{3-\mathcal{d}} / \log n) , \\
				\qquad  \mathcal{i}=1; \\
				\lambda \cdot \Omega(n^{3-\mathcal{d}})  , \\
				\qquad  0\leq\mathcal{i}<1.
			\end{array}
			\right. $ &

				$\lambda \cdot \Omega(n^{\frac{3}{2}} / \sqrt{\log n}) ,   \mathcal{i}\geq0. $ &
				$\lambda \cdot \Omega(n^{\frac{3}{2}}),  \mathcal{i}\geq0. $ \\
			\midrule
			
			$\mathcal{d}=\frac{3}{2}$ & $\left\{
			\begin{array}{ll}
%
%
				
				\uline{\lambda \cdot \Omega(n)}  ,   \\
				\qquad \mathcal{i} >\frac{3}{2};\\
%
%
				\dashuline{\lambda \cdot \Omega(n \log n)} , \\
				\qquad \mathcal{i}=\frac{3}{2};  \\
				\lambda \cdot \Omega(n^{\frac{5}{2}-\mathcal{i}}) , \\
				\qquad 1<\mathcal{i}<\frac{3}{2};  \\
				\lambda \cdot \Omega(n^{\frac{3}{2}} / \log n) , \\
				\qquad \mathcal{i}=1;  \\
				\lambda \cdot \Omega(n^{\frac{3}{2}})  ,  \\
				\qquad 0\leq\mathcal{i}<1.\\
			\end{array}
			\right. $ &
			$\left\{
			\begin{array}{ll}
%
%
				\dashuline{\lambda \cdot \Omega(n \log n)}  ,   \\
				\qquad \mathcal{i} \geq\frac{3}{2};\\
				\lambda \cdot \Omega(n^{\frac{5}{2}-\mathcal{i}}) , \\
				\qquad 1<\mathcal{i}<\frac{3}{2};  \\
				\lambda \cdot \Omega(n^{\frac{3}{2}} / \log n) , \\
				\qquad \mathcal{i}=1;  \\
				\lambda \cdot \Omega(n^{\frac{3}{2}})  ,  \\
				\qquad 0\leq\mathcal{i}<1.\\
			\end{array}
			\right. $ &
			$\left\{
			\begin{array}{ll}
				\lambda \cdot \Omega(n^{2-\frac{\mathcal{s}}{2}})  ,   \\
				\qquad  \mathcal{i} \geq\frac{3}{2};\\
				\lambda \cdot \Omega(n^{\frac{5}{2}-\mathcal{i}}) , \\
				\qquad  1<\mathcal{i}<\frac{3}{2};  \\
				\lambda \cdot \Omega(n^{\frac{3}{2}} / \log n) , \\
				\qquad \mathcal{i}=1;  \\
				\lambda \cdot \Omega(n^{\frac{3}{2}})  ,  \\
				\qquad  0\leq\mathcal{i}<1.\\
			\end{array}
			\right. $ &
			$\left\{
			\begin{array}{ll}
				\lambda \cdot \Omega(n^{\frac{3}{2}} / \sqrt{\log n})  ,   \\
				\qquad  \mathcal{i} >1;\\
				\lambda \cdot \Omega(n^{\frac{3}{2}} \sqrt{\log n})  ,  \\
				\qquad  0\leq\mathcal{i}\leq1.\\
			\end{array}
			\right. $ &
			$\left\{
			\begin{array}{ll}
				\lambda \cdot \Omega(n^{\frac{3}{2}}),   \\
				\qquad  \mathcal{i} >1;\\
				\lambda \cdot \Omega(n^{\frac{3}{2}} \log n)  ,  \\
				\qquad  0\leq\mathcal{i}\leq1.\\
			\end{array}
			\right. $ \\
			\midrule
			
			$1<\mathcal{d}<\frac{3}{2}$ & $\left\{
			\begin{array}{ll}
%
%
				
				\uline{\lambda \cdot \Omega(n)}  ,  \\
				\qquad  \mathcal{i}>3-\mathcal{d}; \\
%
%
				\dashuline{\lambda \cdot \Omega(n \log n)} , \\
				\qquad \mathcal{i}=3-\mathcal{d}; \\
				\lambda \cdot \Omega(n^{4-\mathcal{i}-\mathcal{d}}) , \\
				\qquad  1<\mathcal{i}<3-\mathcal{d};  \\
				\lambda \cdot \Omega(n^{3-\mathcal{d}} / \log n) , \\
				\qquad  \mathcal{i}=1; \\
				\lambda \cdot \Omega(n^{3-\mathcal{d}})  ,  \\
				\qquad 0\leq\mathcal{i}<1.\\
			\end{array}
			\right. $ &
			$\left\{
			\begin{array}{ll}
%
%
				\dashuline{\lambda \cdot \Omega(n \log n)}  ,  \\
				\qquad \mathcal{i}\geq3-\mathcal{d}; \\
				\lambda \cdot \Omega(n^{4-\mathcal{i}-\mathcal{d}}) , \\
				\qquad 1<\mathcal{i}<3-\mathcal{d};  \\
				\lambda \cdot \Omega(n^{3-\mathcal{d}} / \log n) , \\
				\qquad \mathcal{i}=1; \\
				\lambda \cdot \Omega(n^{3-\mathcal{d}})  ,  \\
				\qquad  0\leq\mathcal{i}<1.\\
			\end{array}
			\right. $ &
			$\left\{
			\begin{array}{ll}
				\lambda \cdot \Omega(n^{2-\frac{\mathcal{s}}{2}})  ,  \\
				\qquad  \mathcal{i}\geq3-\mathcal{d}; \\
				\lambda \cdot \Omega(n^{4-\mathcal{i}-\mathcal{d}}) , \\
				\qquad  1<\mathcal{i}<3-\mathcal{d};  \\
				\lambda \cdot \Omega(n^{3-\mathcal{d}} / \log n) , \\
				\qquad  \mathcal{i}=1; \\
				\lambda \cdot \Omega(n^{3-\mathcal{d}})  ,  \\
				\qquad 0\leq\mathcal{i}<1.\\
			\end{array}
			\right. $ &
			$\left\{
			\begin{array}{ll}
				\lambda \cdot \Omega(n^{\frac{3}{2}} / \sqrt{\log n})  ,  \\
				\qquad \mathcal{i}>\frac{5}{2}-\mathcal{d}; \\
				\lambda \cdot \Omega(n^{\frac{3}{2}} \sqrt{\log n}) , \\
				\qquad \mathcal{i}=\frac{5}{2}-\mathcal{d}; \\
				\lambda \cdot \Omega(n^{4-\mathcal{i}-\mathcal{d}}) , \\
				\qquad 1<\mathcal{i}<\frac{5}{2}-\mathcal{d};  \\
				\lambda \cdot \Omega(n^{3-\mathcal{d}} / \log n) , \\
				\qquad \mathcal{i}=1; \\
				\lambda \cdot \Omega(n^{3-\mathcal{d}})  ,  \\
				\qquad 0\leq\mathcal{i}<1.\\
			\end{array}
			\right. $ &
			$\left\{
			\begin{array}{ll}
				
				\lambda \cdot \Omega(n^{\frac{3}{2}}),  \\
				\qquad \mathcal{i}>\frac{5}{2}-\mathcal{d}; \\
				\lambda \cdot \Omega(n^{\frac{3}{2}} \cdot \log n) , \\
				\qquad \mathcal{i}=\frac{5}{2}-\mathcal{d}; \\
				\lambda \cdot \Omega(n^{4-\mathcal{i}-\mathcal{d}}) , \\
				\qquad 1<\mathcal{i}<\frac{5}{2}-\mathcal{d};  \\
				\lambda \cdot \Omega(n^{3-\mathcal{d}} / \log n) , \\
				\qquad \mathcal{i}=1; \\
				\lambda \cdot \Omega(n^{3-\mathcal{d}})  ,  \\
				\qquad  0\leq\mathcal{i}<1.\\
			\end{array}
			\right. $ \\
			\midrule
			
			$\mathcal{d}=1$ & $\left\{
			\begin{array}{ll}
				
%
%
				\uline{\lambda \cdot \Omega(n)}  ,   \\
				\qquad  \mathcal{i}\geq 2;\\
				\lambda \cdot \Omega(n^{3-\mathcal{i}} / \log n) , \\
				\qquad  1<\mathcal{i}<2;  \\
				\lambda \cdot \Omega(n^{2} / (\log n)^2) , \\
				\qquad  \mathcal{i}=1;  \\
				\lambda \cdot \Omega(n^{2} / \log n)  ,  \\
				\qquad 0\leq\mathcal{i}<1.\\
			\end{array}
			\right. $ &
			$\left\{
			\begin{array}{ll}
%
%
				\dashuline{\lambda \cdot \Omega(n \log n)}  ,   \\
				\qquad \mathcal{i}\geq 2;\\
				\lambda \cdot \Omega(n^{3-\mathcal{i}} / \log n) , \\
				\qquad 1<\mathcal{i}<2;  \\
				\lambda \cdot \Omega(n^{2} / (\log n)^2) , \\
				\qquad  \mathcal{i}=1;  \\
				\lambda \cdot \Omega(n^{2} / \log n)  ,  \\
				\qquad 0\leq\mathcal{i}<1.\\
			\end{array}
			\right. $ &
			$\left\{
			\begin{array}{ll}
				\lambda \cdot \Omega(n^{2-\frac{\mathcal{s}}{2}})  ,   \\
				\qquad \mathcal{i}\geq 2;\\
				\lambda \cdot \Omega(n^{3-\mathcal{i}} / \log n) , \\
				\qquad 1<\mathcal{i}<2;  \\
				\lambda \cdot \Omega(n^{2} / (\log n)^2) , \\
				\qquad  \mathcal{i}=1;  \\
				\lambda \cdot \Omega(n^{2} / \log n)  ,  \\
				\qquad 0\leq\mathcal{i}<1.\\
			\end{array}
			\right. $ &
			$\left\{
			\begin{array}{ll}
				\lambda \cdot \Omega(n^{\frac{3}{2}} / \sqrt{\log n})  ,   \\
				\qquad \mathcal{i}\geq 2;\\
				\lambda \cdot \Omega(n^{3-\mathcal{i}} / \log n) , \\
				\qquad  1<\mathcal{i}<2;  \\
				\lambda \cdot \Omega(n^{2} / (\log n)^2) , \\
				\qquad  \mathcal{i}=1;  \\
				\lambda \cdot \Omega(n^{2} / \log n)  ,  \\
				\qquad 0\leq\mathcal{i}<1.\\
				
			\end{array}
			\right. $ &
			$\left\{
			\begin{array}{ll}
				
				\lambda \cdot \Omega(n^{\frac{3}{2}}),   \\
				\qquad \mathcal{i}\geq 2;\\
				\lambda \cdot \Omega(n^{3-\mathcal{i}} / \log n) , \\
				\qquad  1<\mathcal{i}<2;  \\
				\lambda \cdot \Omega(n^{2} / (\log n)^2) , \\
				\qquad  \mathcal{i}=1;  \\
%
%
				
				\lambda \cdot \Omega(n^{2} / \log n)  ,  \\
				\qquad  0\leq\mathcal{i}<1.\\
			\end{array}
			\right. $ \\
			\hline
			
			$0\leq\mathcal{d}<1$ & $\left\{
			\begin{array}{ll}
				
%
%
				\uline{\lambda \cdot \Omega(n)}  ,   \\
%
%
				\dashuline{\lambda \cdot \Omega(n \log n)} , \\
				\qquad  \mathcal{i}=2;  \\
				\lambda \cdot \Omega(n^{3-\mathcal{i}}) , \\
				\qquad  1<\mathcal{i}<2;  \\
				\lambda \cdot \Omega(n^{2} / \log n) , \\
				\qquad  \mathcal{i}=1;  \\
				
				\uuline{\lambda \cdot \Omega(n^{2})}, \\
				\qquad  0\leq\mathcal{i}<1.\\
			\end{array}
			\right. $ &
			$\left\{
			\begin{array}{ll}
				
			\dashuline{\lambda \cdot \Omega(n \log n)},\\
				\qquad  \mathcal{i}\geq2;\\
				\lambda \cdot \Omega(n^{3-\mathcal{i}}) , \\
				\qquad  1<\mathcal{i}<2;  \\
				\lambda \cdot \Omega(n^{2} / \log n) , \\
				\qquad  \mathcal{i}=1;  \\
				\uuline{ \lambda \cdot \Omega(n^{2})},  \\
				\qquad  0\leq\mathcal{i}<1.\\
			\end{array}
			\right. $ &
			$\left\{
			\begin{array}{ll}
				\lambda \cdot \Omega(n^{2-\frac{\mathcal{s}}{2}})  ,   \\
				\qquad \mathcal{i}\geq2;\\
				\lambda \cdot \Omega(n^{3-\mathcal{i}}) , \\
				\qquad  1<\mathcal{i}<2;  \\
				\lambda \cdot \Omega(n^{2} / \log n) , \\
				\qquad  \mathcal{i}=1;  \\
				\uuline{ \lambda \cdot \Omega(n^{2})}, \\
				\qquad  0\leq\mathcal{i}<1.\\
			\end{array}
			\right. $ &
			$\left\{
			\begin{array}{ll}
				\lambda \cdot \Omega(n^{\frac{3}{2}} / \sqrt{\log n})  ,   \\
				\qquad  \mathcal{i}>\frac{3}{2};\\
				\lambda \cdot \Omega(n^{\frac{3}{2}} \sqrt{\log n}) , \\
				\qquad  \mathcal{i}=\frac{3}{2};  \\
				\lambda \cdot \Omega(n^{3-\mathcal{i}}) , \\
				\qquad  1<\mathcal{i}<\frac{3}{2};  \\
				\lambda \cdot \Omega(n^{2} / \log n) , \\
				\qquad  \mathcal{i}=1;  \\
				
			\uuline{\lambda \cdot \Omega(n^{2})},  \\
				\qquad  0\leq\mathcal{i}<1.\\
			\end{array}
			\right. $ &
			$\left\{
			\begin{array}{ll}
				\lambda \cdot \Omega(n^{\frac{3}{2}}),   \\
				\qquad  \mathcal{i}>\frac{3}{2};\\
				\lambda \cdot \Omega(n^{\frac{3}{2}} \cdot \log n) , \\
				\qquad  \mathcal{i}=\frac{3}{2};  \\
				\lambda \cdot \Omega(n^{3-\mathcal{i}}) , \\
				\qquad  1<\mathcal{i}<\frac{3}{2};  \\
				\lambda \cdot \Omega(n^{2} / \log n) , \\
				\qquad  \mathcal{i}=1;  \\
				\uuline{\lambda \cdot \Omega(n^{2})},  \\

				\qquad  0\leq\mathcal{i}<1.\\
			\end{array}
			\right. $ \\
			\bottomrule
			\multicolumn{6}{c}{

				\uuline{\textbf{Metcalfe's Law}}\qquad \qquad \quad 
				\uline{\textbf{Sarnoff's Law}}\qquad\qquad \quad 
				\dashuline{\textbf{Odlyzko's Law}}\qquad 		\qquad \quad 
				When $\lambda=\Theta(1)$, \textbf{cube law} cannot be achieved.
				
			}
		\end{tabular}

	}
	\captionsetup{type=table, name=Extended Data Table}
	\caption{\centering The Results on the Traffic Load $\mathrm{L}_{\mathrm{N}}^{\ast}$ and Labeling of Scaling Laws under $\lambda=\Theta(1)$. Different theoretical results derived from the distribution above.}	\label{G-tab-total}
\end{table*}

\begin{figure*}[t]
	\centering
	
	\vspace{3pt}
	\begin{minipage}{0.32\linewidth}
		\subfloat[$0\leq \mathcal{d}<1,\mathcal{s}>2$]{\includegraphics[width=\textwidth]{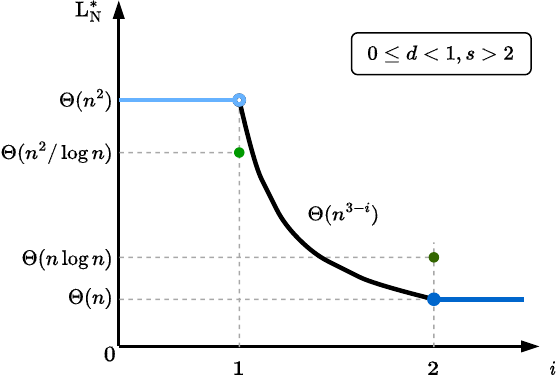}}
		
	\end{minipage}
	\begin{minipage}{0.32\linewidth}
		\subfloat[$0\leq \mathcal{d}<1,\mathcal{s}=2$]{\includegraphics[width=\textwidth]{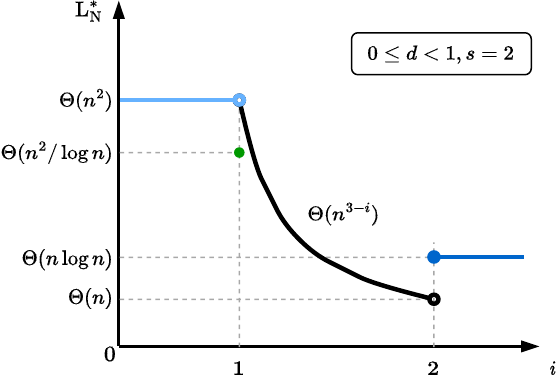}}
	\end{minipage}
	\begin{minipage}{0.32\linewidth}
		\subfloat[$0\leq \mathcal{d}<1,1<\mathcal{s}<2$]{\includegraphics[width=\textwidth]{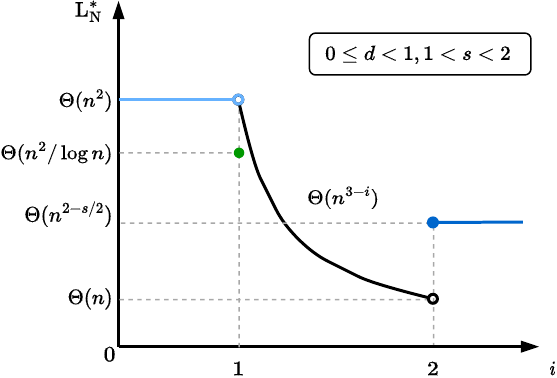}}
	\end{minipage}
	\begin{minipage}{0.35\linewidth}
		\vspace{10pt}
		\subfloat[$0\leq \mathcal{d}<1,\mathcal{s}=2$]{\includegraphics[width=\textwidth]{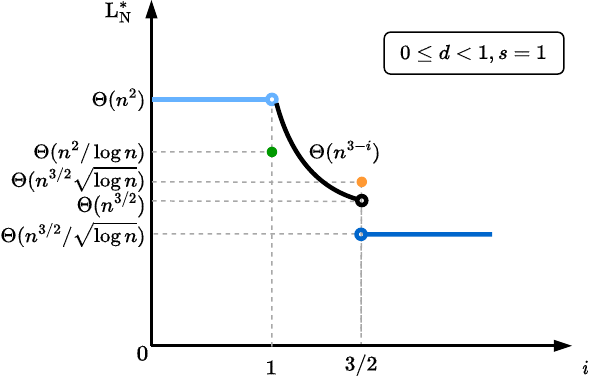}}
	\end{minipage}
	\hspace{10mm}
	\begin{minipage}{0.35\linewidth}
		\vspace{10pt}
		\subfloat[$0\leq \mathcal{d}<1,0<\mathcal{s}<1$]{\includegraphics[width=\textwidth]{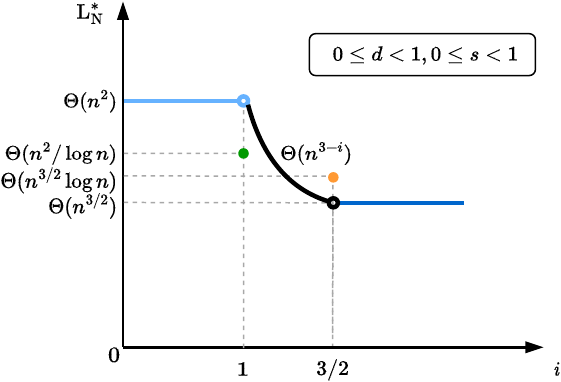}}
	\end{minipage}
	\captionsetup{type=figure, name=Extended Data Figure}
	\caption{The order of low bounds on the derived traffic load, under $\lambda=\Theta(1)$ and $0\le \mathcal{d}<1$.}\label{pic-understand}
	\label{thr}
\end{figure*}

\begin{table}[t]
	\centering

	\renewcommand\arraystretch{1.2}
	\resizebox{0.4\linewidth}{!}{	\begin{tabular}{|l ||l|}
			\toprule
			$\mathcal{d}$ & $ G\left(\mathcal{i}, \mathcal{d}\right) $\\
			\midrule
			
			$\mathcal{d}>\frac{3}{2}$ &
			$ \Theta(1) $  , $ \mathcal{i}\geq0.$ \\
			\midrule
			$\mathcal{d}=\frac{3}{2}$ &
			$\left\{
			\begin{array}{ll}
				\Theta(1)  , &  \mathcal{i} >1; \\
				\Theta(\log n)  , &  0\leq\mathcal{i}\leq1.
			\end{array}
			\right. $ \\
			\midrule
			$1<\mathcal{d}<\frac{3}{2}$ &
			$\left\{
			\begin{array}{ll}
				\Theta(1) , &  \mathcal{i}>\frac{5}{2}-\mathcal{d};  \\
				\Theta(\log n) , &  \mathcal{i}=\frac{5}{2}-\mathcal{d}; \\
				\Theta(n^{\frac{5}{2}-\mathcal{i}-\mathcal{d}})  , &  1<\mathcal{i}<\frac{5}{2}-\mathcal{d}; \\
				\Theta(n^{\frac{3}{2}-\mathcal{d}} / \log n) , &  \mathcal{i}=1; \\
				\Theta(n^{\frac{3}{2}-\mathcal{d}})  , &  0\leq\mathcal{i}<1.
			\end{array}
			\right. $ \\
			\midrule
			$\mathcal{d}=1$ &
			$\left\{
			\begin{array}{ll}
				\Theta(1)  , &  \mathcal{i}\geq\frac{3}{2};\\
				\Theta(n^{\frac{3}{2}-\mathcal{i}}/ \log n) , &  1<\mathcal{i}<\frac{3}{2};  \\
				\Theta(n^{\frac{1}{2}} / (\log n)^2) , &  \mathcal{i}=1;  \\
				\Theta(n^{\frac{1}{2}}/ \log n)  , &  0\leq\mathcal{i}<1.
			\end{array}
			\right. $ \\
			\midrule
			$0\leq\mathcal{d}<1$ &
			$\left\{
			\begin{array}{ll}
				\Theta(1)  , &  \mathcal{i}>\frac{3}{2}; \\
				\Theta(\log n)  , &  \mathcal{i}=\frac{3}{2}; \\
				\Theta(n^{\frac{3}{2}-\mathcal{i}})  , &  1<\mathcal{i}<\frac{3}{2}; \\
				\Theta(n^{\frac{1}{2}}/\log n) , &  \mathcal{i}=1;  \\
				\Theta(n^{\frac{1}{2}})  , &  0\leq\mathcal{i}<1.
			\end{array}
			\right. $ \\
			\bottomrule
		\end{tabular}
	}
	\captionsetup{type=table, name=Extended Data Table}
	\caption{The Value of $G(\mathcal{i},\mathcal{d})$}	\label{tab-G}
\end{table}

\begin{table}[t]
	\centering

	\renewcommand\arraystretch{1.2}
	\resizebox{0.4\linewidth}{!}{	\begin{tabular}{|l ||l|}
			\toprule
			$\mathcal{d}$ & $ W\left(\mathcal{i}, \mathcal{d}\right) $\\
			\midrule
			$\mathcal{d}>2$ &
			$ \Theta(n) $  , $ \mathcal{i}\geq0.$ \\
			\hline
			$\mathcal{d}=2$ &
			$\left\{
			\begin{array}{ll}
				\Theta(n)  , &  \mathcal{i} >1; \\
				\Theta(n \cdot \log n)  , &  0\leq\mathcal{i}\leq1.
			\end{array}
			\right. $ \\
			\midrule
			$1<\mathcal{d}<2$ &
			$\left\{
			\begin{array}{ll}
				\Theta(n) , &  \mathcal{i}>3-\mathcal{d};  \\
				\Theta(n \cdot \log n) , &  \mathcal{i}=3-\mathcal{d}; \\
				\Theta(n^{4-\mathcal{i}-\mathcal{d}})  , &  1<\mathcal{i}<3-\mathcal{d}; \\
				\Theta(n^{3-\mathcal{d}} / \log n) , &  \mathcal{i}=1; \\
				\Theta(n^{3-\mathcal{d}})  , &  0\leq\mathcal{i}<1.
			\end{array}
			\right. $ \\
			\midrule
			$\mathcal{d}=1$ &
			$\left\{
			\begin{array}{ll}
				\Theta(n)  , &  \mathcal{i}\geq2;\\
				\Theta(n^{3-\mathcal{i}}/ \log n) , &  1<\mathcal{i}<2;  \\
				\Theta(n^{2} / (\log n)^2) , &  \mathcal{i}=1;  \\
				\Theta(n^{2}/ \log n)  , &  0\leq\mathcal{i}<1.
			\end{array}
			\right. $ \\
			\midrule
			$0\leq\mathcal{d}<1$ &
			$\left\{
			\begin{array}{ll}
				\Theta(n)  , &  \mathcal{i}>2; \\
				\Theta(n \cdot \log n)  , &  \mathcal{i}=2; \\
				\Theta(n^{3-\mathcal{i}})  , &  1<\mathcal{i}<2; \\
				\Theta(n^{2}/\log n) , &  \mathcal{i}=1;  \\
				\Theta(n^{2})  , &  0\leq\mathcal{i}<1.
			\end{array}
			\right. $ \\
			\bottomrule
	\end{tabular}}
	\captionsetup{type=table, name=Extended Data Table}
	\caption{The Number of All Destinations, $W\left(\mathcal{i}, \mathcal{d}\right)$}\label{tab-W-gamma-varphi}
\end{table}

\begin{figure}[t]
	\centering
	\includegraphics[width=0.48\textwidth]{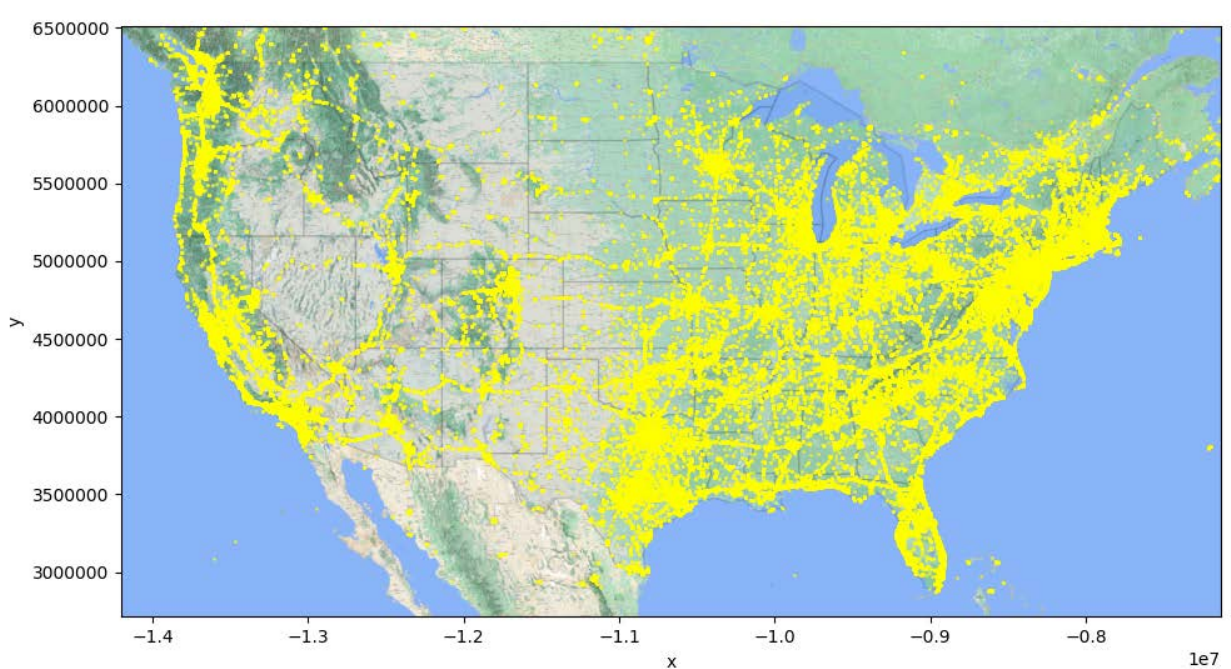}
	\captionsetup{type=figure, name=Extended Data Figure}
	\caption{The user distribution map of Gowalla illustrates the geographical spread plotted against latitude and longitude. Utilizing this information, we pinpointed two specific areas for closer examination in Figure \ref{fig_gowalla} to assess the rationale behind setting $\mathcal{g}=0$ in this study.}
	\label{fig:final_map}
\end{figure}

\begin{figure}[t]
	\centering
	\begin{minipage}{0.4\linewidth}
		\subfloat[Bloomington in Illinois]{\includegraphics[width=0.8\textwidth]{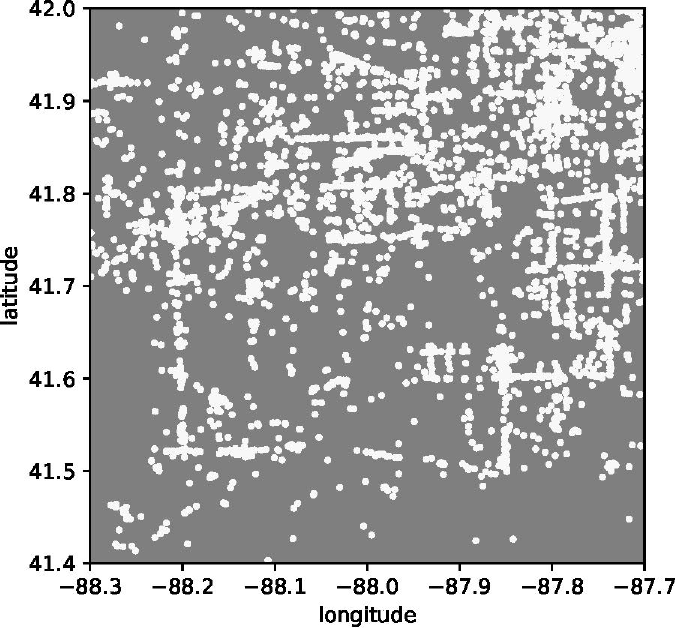}}
		
	\end{minipage}
	\begin{minipage}{0.5\linewidth}
		\subfloat[San Francisco in California]{\includegraphics[width=0.8\textwidth]{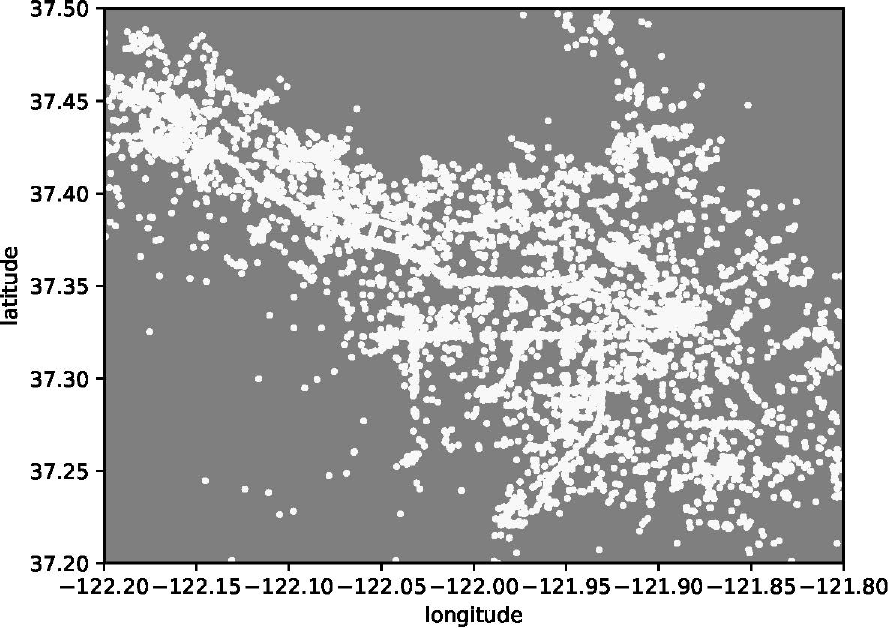}}
	\end{minipage}
	\vspace{-0.1in}
	\captionsetup{type=figure, name=Extended Data Figure}
	\caption{The geographical position of Gowalla Users in Illinois and California, respectively.}
	\label{fig_gowalla}
	\vspace{-0.5cm}
\end{figure}

\begin{figure*}[t]
	\centering
	\subfloat[Sarnoff's Function $y=an+b$]{\begin{minipage}{0.48\linewidth}
			\includegraphics[width=\textwidth]{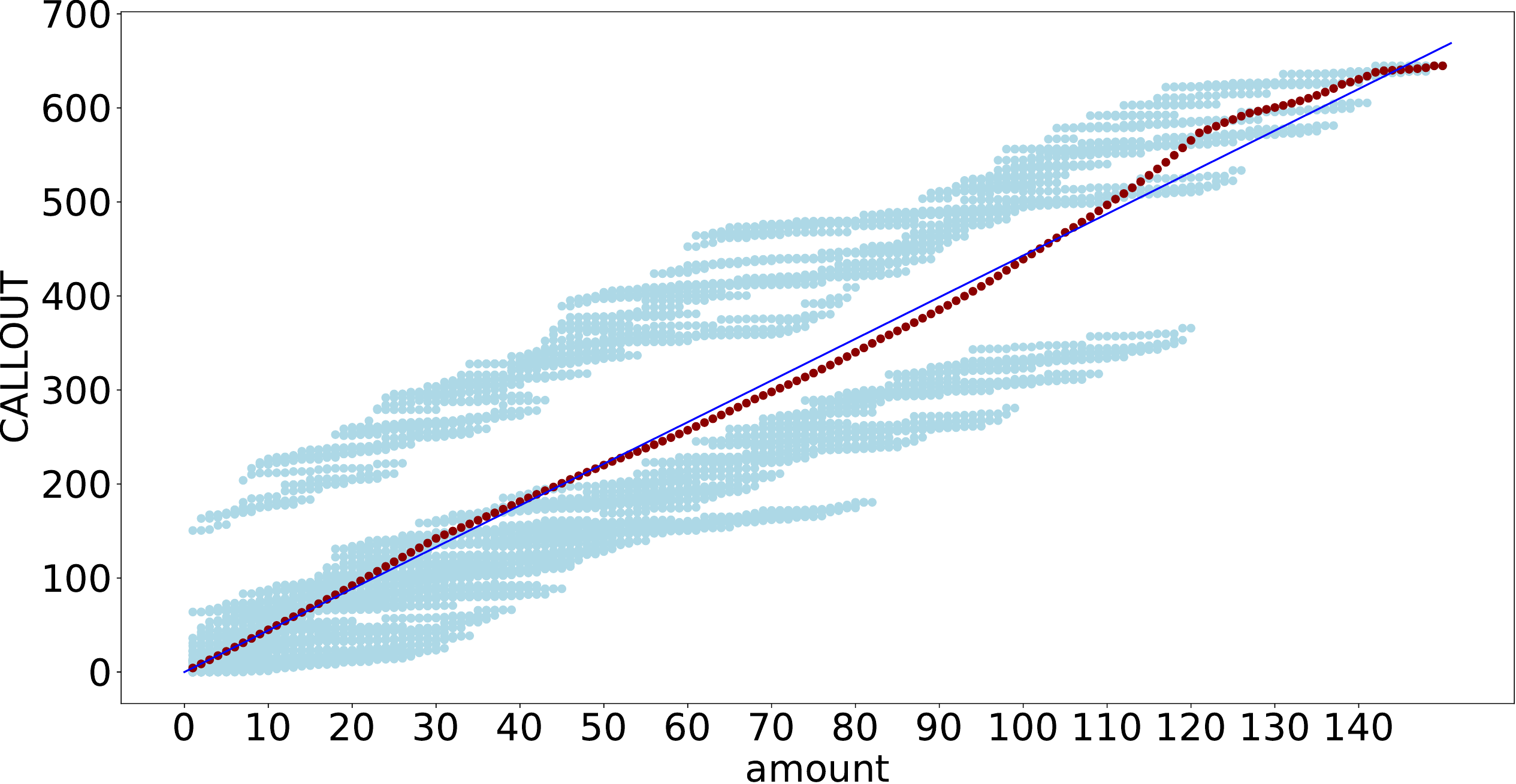}
	\end{minipage}}
\subfloat[Odlyzko's Function $y=an\ln n+bn +c$]{\begin{minipage}{0.48\linewidth}
		\includegraphics[width=\textwidth]{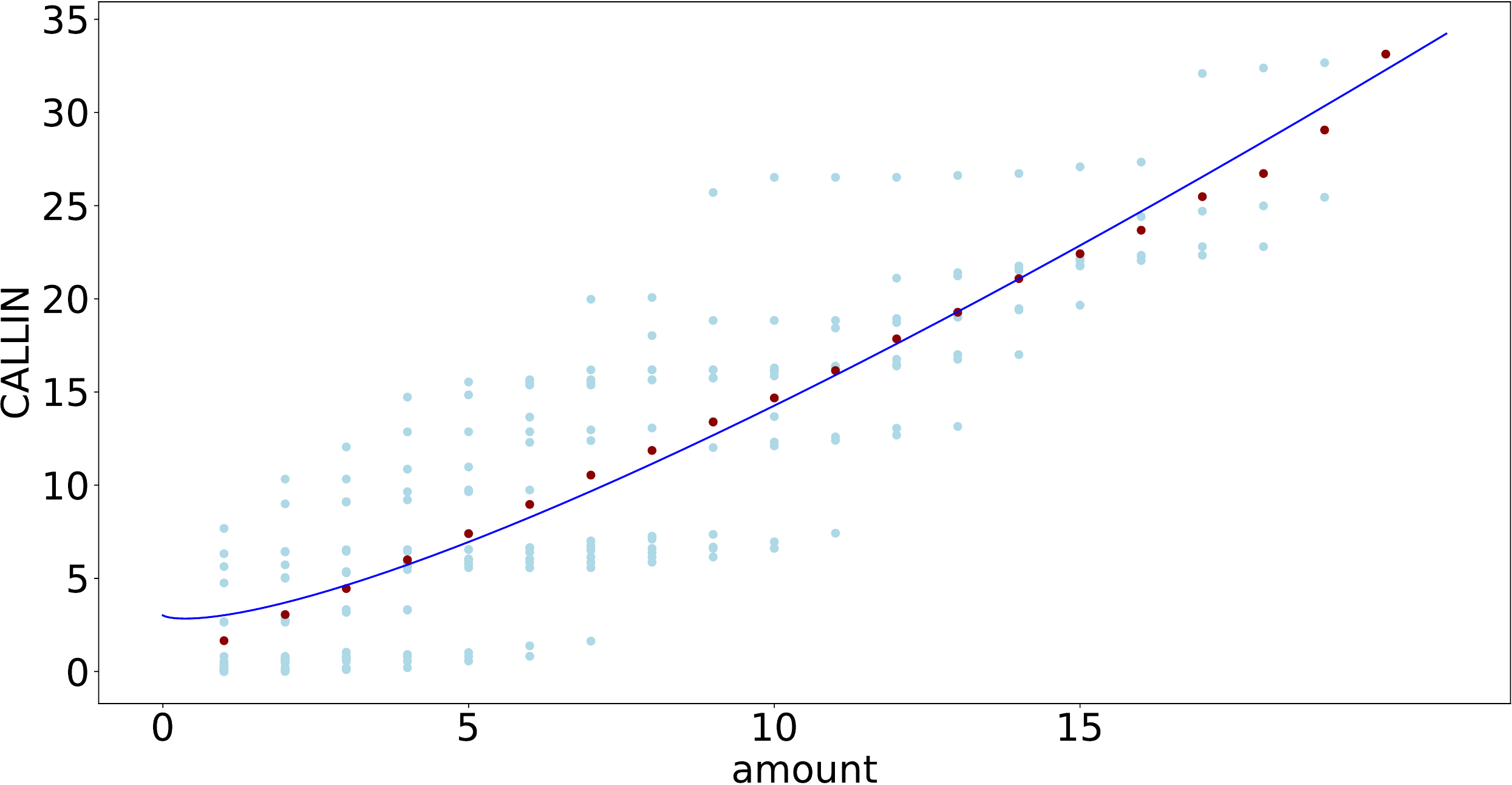}
\end{minipage}}

\subfloat[Metcalfe's Function $y=an^2+bn +c$]{\begin{minipage}{0.48\linewidth}
	\includegraphics[width=\textwidth]{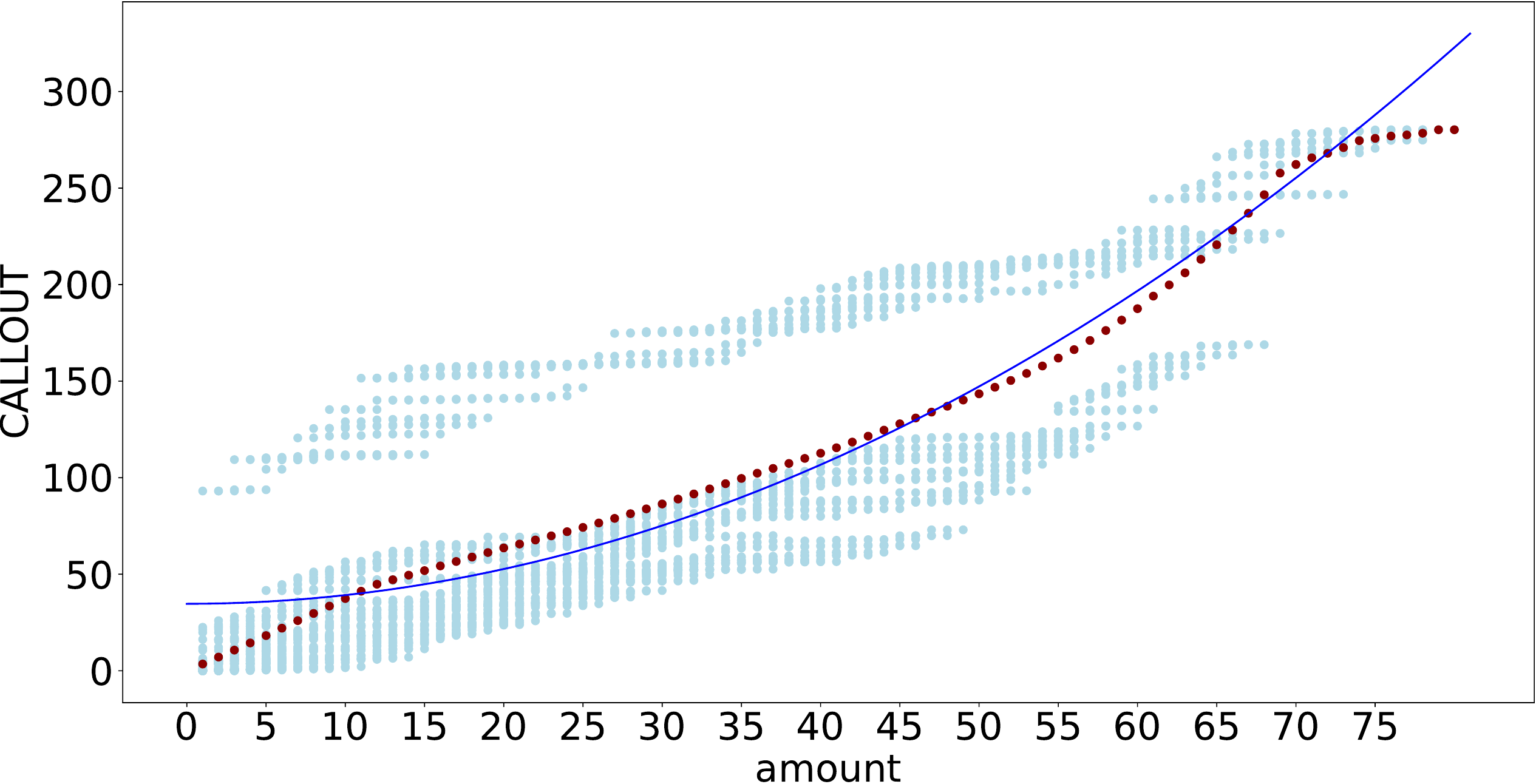}
\end{minipage}}
\subfloat[Cube Function $y=an^3+bn^2 +cn+d$]{\begin{minipage}{0.48\linewidth}
\includegraphics[width=\textwidth]{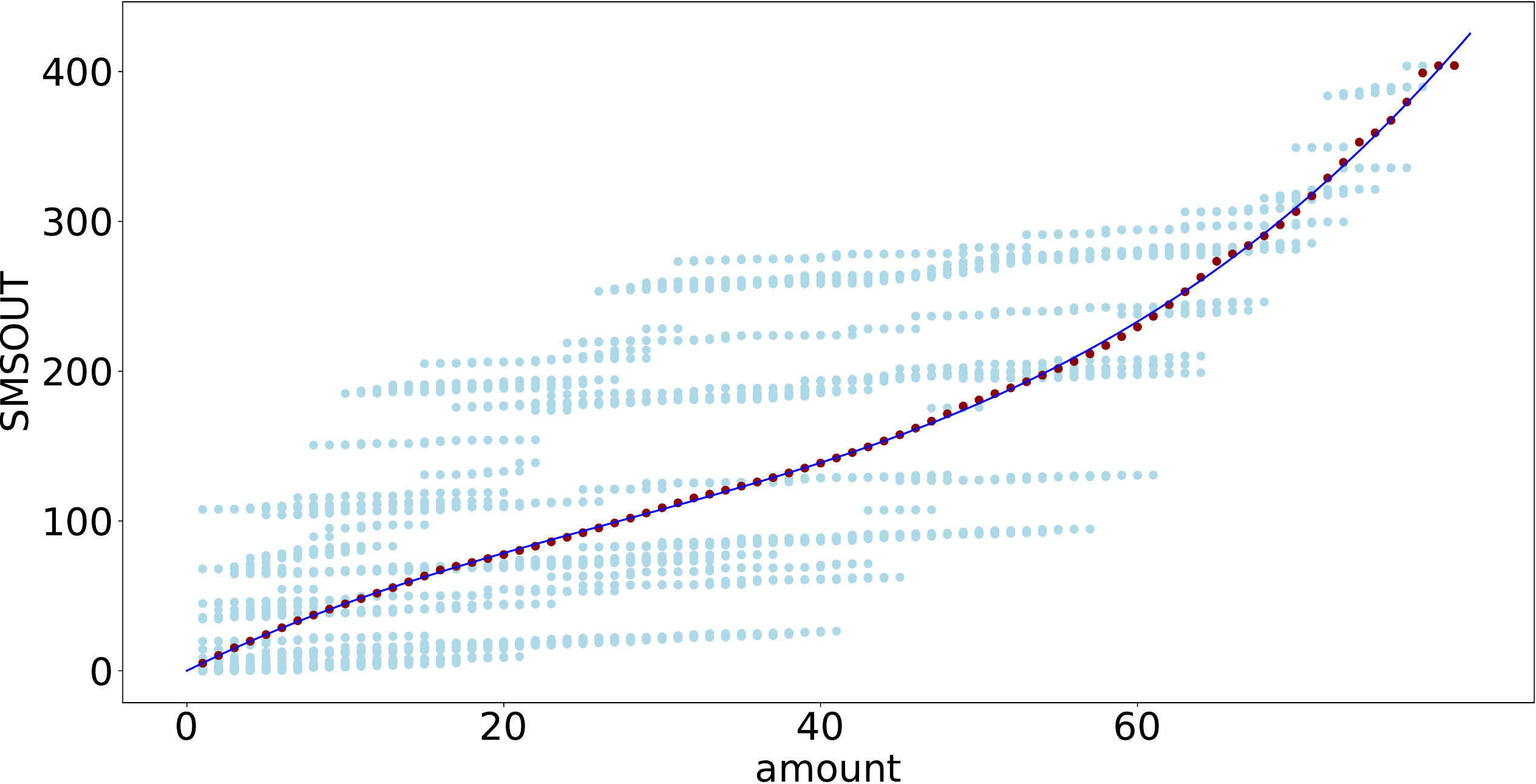}
\end{minipage}}
\captionsetup{type=table, name=Extended Data Figure}
\caption{The other empirical curve fitting results of Metcalfe's law and its variants, which can be seen as $y=an+b,y=an\ln n +bn+c, y=an^2+bn+c, y=an^3+bn^2+cn+d$, corresponding to Sarnoff's function, Odlyzko's function, Metcalfe's function, and Cube function, respectively.}
\label{pic-fit2}
\vspace{-0.1in}
\end{figure*}

\begin{table*}[t] \renewcommand{\arraystretch}{1.2}
	
	 \centering
	
	\resizebox{0.9\linewidth}{!}{    
		\begin{tabular}{|l|l|l|l|l|}
			\toprule & Given Day&Number of Nodes $n$ & Interaction Types & Value Function              \\ \midrule
			\multirow{2}{*}{Sarnoff's function}  &3& 110            &Call-out& $\mathrm{L}^*_{\mathrm{N}}(n)=2.34n$                     \\ \cmidrule{2-5}
			&4&150             &Call-out                        & $\mathrm{L}^*_{\mathrm{N}}(n)=4.43n$                     \\ \midrule
			\multirow{2}{*}{Odlyzko's function}  & 4&20         &Call-out     & $\mathrm{L}^*_{\mathrm{N}}(n)=-0.5n\ln n+4.71n-2.99$       \\ \cmidrule{2-5}
			&     6 & 20             & Call-in                  & $\mathrm{L}^*_{\mathrm{N}}(n)=0.48n\ln n+3.02$       \\ \midrule
			\multirow{2}{*}{Metcalfe's function} & 25&80           &Call-out & $\mathrm{L}^*_{\mathrm{N}}(n)=0.094n^2+74.65$            \\ \cmidrule{2-5}
			&  25   & 160                & SMS-out             & $\mathrm{L}^*_{\mathrm{N}}(n)=0.034n^2+101.14$            \\ \midrule
			\multirow{2}{*}{Cube function}       &12& 60                &Call-in& $\mathrm{L}^*_{\mathrm{N}}(n)=0.0021n^3-0.129n^2+5.82n$  \\ \cmidrule{2-5}
			&     8  &       80                      &SMS-out& $\mathrm{L}^*_{\mathrm{N}}(n)=0.0011n^3-0.088n^2+5.25n$ \\ \bottomrule
		\end{tabular}
		
	}
	\captionsetup{type=table, name=Extended Data Table}
	\caption{Empirical Result of Metcalfe's Law and its Variants}
	\label{tab-law-fit}
	\vspace{-0.3cm}
\end{table*}

\end{document}